\newcommand{\ket}[1]{\ensuremath{\left|#1\right\rangle}}
\newcommand{\bra}[1]{\ensuremath{\left\langle#1\right|}}
\newtheorem{theorem}{Theorem}
\newtheorem{corollary}{Corollary}
\newtheorem{lemma}{Lemma}
\newtheorem{remark}{Remark}
\begin{document}

\title{Provable DI-QRNG protocols based on self-testing methodologies in preparation and measure scenario}
\author{Asmita Samanta$^1$\footnote{asmitasamanta6@gmail.com}, Arpita Maitra$^2$\footnote{arpita76b@gmail.com}, Goutam Paul$^3$\footnote{goutam.paul@isical.ac.in (Corresponding Author)}}
\affiliation{$^1$ Applied Statistics Unit, Indian Statistical Institute, Kolkata 700108, India.\\
$^2$ TCG Centres for Research and Education 
in Science and Technology (TCG CREST),
 Kolkata 700091, India.\\
$^3$ Cryptology and Security Research Unit, Indian Statistical Institute, Kolkata 700108, India.}

\begin{abstract}
We present two Device Independent Quantum Random Number Generator (DI-QRNG) protocols using two self-testing methodologies in Preparation \& Measure  (P\&M) scenario. These two methodologies are the variants of two well-known non-local games, namely, CHSH and pseudo-telepathy games, in P\&M framework. We exploit them as distinguishers in black-box settings to differentiate the classical and the quantum paradigms and hence to certify the Device Independence. The first self-test was proposed by Tavakoli et al. (Phys. Rev. A, 2018). We show that this is actually a P\&M variant of the CHSH game. Then based on this self-test, we design our first DI-QRNG protocol. We also propose a new self-testing methodology, which is the first of its kind that is reducible from pseudo-telepathy game in P\&M framework. Based on this new self-test, we design our second DI-QRNG protocol. 
\end{abstract}
\maketitle

\section{introduction}
The goal of self-testing is to validate whether a device is working as desired using only black-box type of interaction. In the quantum domain, self-testing is used as a mechanism to certify whether a protocol is device-independent, i.e., whether the protocol works correctly without any assumption on the devices, solely based on the measurement statistics. The question of device-independent security was first introduced in 1998 by Mayers and Yao in~\cite{mayers98} to take care of imperfect devices. Many years later in~\cite{barrett05}, it was shown that such security is indeed possible in principle, however, it uses a lot of classical communications which cost exponentially in the security parameter. This technique is well-established in Quantum Key Distribution (QKD) protocols~\cite{bb84,Ekert,six,semi,side2,side1} for certifying the security of the protocols independent to the underlying functionality of the devices~\cite{acin06a,acin06b,scarani06,acin07,pironio09,masanes09,hanggi10,masanes11,barrett12,reichardt12,vaz12} . 

%The literature~\cite{acin07,pironio09,masanes09,hanggi10,masanes11} report that Device-independent QKD (DI-QKD) is even possible in the midst of collective attacks with the stronger assumption that the attacker (say, Eve) is bounded by the laws of quantum mechanics under restrictive independence, i.e., each usage of the devices is independent and identically distributed (i.i.d.). Interestingly, two recent works~\cite{barrett12,reichardt12} present security proofs without requiring any i.i.d. assumption between the different uses of the devices. However, like the older schemes, these new protocols are polynomially inefficient and can not tolerate noisy devices. In~\cite{vaz12}, the authors propose a new fully device-independent QKD protocol and gave a security proof with tolerance against constant noise and guaranteed a linear amount of key extraction. 

In the similar fashion, Device-independent (in broader sense) Random Number Generation has been proposed in~\cite{redi1,redi2,redi3,redi4,fdi4,fdi5,si1,rp1,rp2}\footnote{Here, broader implies that some of the protocols are not fully device independent.} In~\cite{redi1}, it is shown that the violation of Bell’s inequality~\cite{Bell} or CHSH inequality~\cite{CHSH} could be used to certify randomness in a device-independent manner, i.e., without trusting the underlying devices. This was expanded theoretically by Colbeck and Kent~\cite{redi2}, who defined minimal assumptions for randomness certification in DI-QRNG and showed its potential for secure cryptographic applications. Liu et al. \cite{redi3} made significant experimental progress by achieving high generation rates using a loophole-free Bell test, marking a major advancement towards real-world applications. Brand\~{a}o et al. \cite{redi4} showed how quantum noise and decoherence impact randomness extraction, and then they offered protocols for managing these challenges in practical scenario. Coudron and Yuen \cite{fdi4} further contributed by integrating quantum theory and computational approaches, creating efficient randomness extraction protocols. Addressing partial randomness, Chung, Shi, and Wu \cite{fdi5} developed methods to amplify weak quantum randomness to near-perfect levels, making DI-QRNG more viable where instrumental imperfections might arise.

On the other hand, source independent random number generator was presented in~\cite{si1}. Semi Device-independent random number expansion is described in~\cite{rp1}, whereas Measurement Device-independent random number generation is reported in~\cite{rp2}.

 In~\cite{tavakoli}, Tavakoli et al. proposed a self-testing methodology in preparation and measure scenario without using any entanglement.
In their setting there are two black-boxes: one device prepares the quantum state $\rho$ and another measures the prepared state and generates a bit $b$. Both the devices accept bits as inputs. The preparation device outputs the quantum state depending on the input bits $x_0$ and $x_1$. The measurement device measures the prepared state based on the input bit $y$ and produces an output bit $b$. The two devices are connected by a quantum channel to transmit the quantum states from the preparation device to the measurement device. No classical communications are allowed between these two devices. A schematic diagram of the set-up is presented in Figure~\ref{fig}.
 \begin{figure}
\begin{center}
 \includegraphics[scale=0.4]{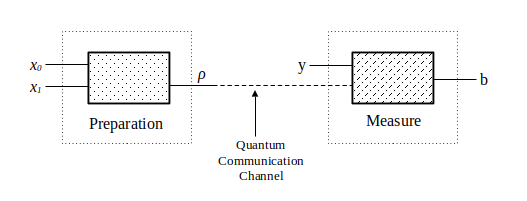}
 \caption{Schematic Diagram of the Self-testing Methodology~\cite{tavakoli}}
 \label{fig}
\end{center}
\end{figure}

In the above scheme, the two black boxes must be separated from each other so that no classical communication is allowed between the boxes during the process. Otherwise, it can be shown that the self-testing methodology can be simulated classically. Irrespective of whether the two devices are purchased from the same or two different vendors, there should only be a quantum channel as a connectivity between these two devices, that transmits a quantum state from the preparation device to the measurement one.

In this backdrop, we exploit the idea to propose a Device-independent Random Number Generation protocol in preparation and measure scenario. In this regard, one should mention the work~\cite{brunner14}, where a self-testing QRNG has been proposed in which the user can monitor the entropy in real time with some basic assumptions. However, the protocol is not fully device-independent. In the paper it is commented ``when these are
produced by trusted (or simply different) providers, it is
reasonable to assume that there are no (built-in) pre-established correlations between the devices". Contrary to this, in our protocol, we remove the assumption of ``trusted providers". We prove that under fair sampling and i.i.d. assumptions, simultaneous occurrence of Bell inequality violation, randomness extraction and deterministic measurements are not possible for any classically correlated devices in a non-signaling framework. Hence, there is no need to introduce ``trusted providers" in our case and the device independence security can be guaranteed.

We also present a new self-testing mechanism which can be reduced from Pseudo-Telepathy game and propose a novel DI-QRNG protocol.
 
\subsection*{Our Contributions}
The main contributions of our work are outlined below.
	
\begin{enumerate}
\item Tavakoli et al.~\cite{tavakoli} mentioned that their protocol does not ``necessarily relies on the violation of Bell's inequality"~\cite{tavakoli}. However, we prove that one can get an entangled version of the protocol (Theorem~\ref{theo}). We also show that the well-known CHSH game with quantum strategy can be reduced to this entangled version (Theorem~\ref{theorem}) and hence to the self-testing methodology presented in~\cite{tavakoli} (Corollary~\ref{cor}). %This reduction helps us to understand why this self-testing method is truly device independent. 
\item Based on the above self-testing method, we propose our first DI-QRNG protocol. We prove the device independence of the protocol (Theorem~\ref{theo1} and~\ref{theo2}) and the true-randomness of its outputs (Theorem~\ref{theo5}).
\item Our next contribution is a novel self-testing game in P\&M scenario. We show that this game can be reduced from the pseudo-telepathy game~\cite{mptg} (Theorem~\ref{theo6}). All the existing self-testing methodologies so far including~\cite{tavakoli} directly rely on the violation of Bell's inequality or can be reduced to that. Contrary to this, ours is the first self-testing methodology in P\&M framework which can be reduced from the pseudo-telepathy game.
\item Finally, we propose a novel DI-QRNG protocol based on our new self-testing game and prove its device independence (Theorem~\ref{theo3}) and true-randomness of its outputs (Theorem~\ref{theo8}). 
\end{enumerate}

%Note that in our analysis, we do not consider any third party adversary. Our adversarial model considers the trapdoor inserted by the manufacturer of the devices and does not consider any interception in the quantum channel.

There are many companies that have launched their QRNG products in the market. These devices are usually designed in prepare-and-measure scenario which is comparatively easy to achieve in practice unlike entanglement-based products. However, those are not Device-independent in nature. In this direction, our proposed protocols may be implemented immediately to fulfil the need of commercial DI-QRNG.
	
\subsection*{Organization of the Paper}
The paper is organized as follows. In Section~\ref{prelim}, we recapitulate CHSH game (\ref{CHSH}),  multi-party pseudo-telepathy game (\ref{mpstg}) and the protocol presented in~\cite{tavakoli} (\ref{revi}). In Section~\ref{DI}, we redefine the quantum strategy of the original CHSH game keeping all the necessary parameters intact in the motivation towards connecting the game and the self-testing methodology of~\cite{tavakoli}. We name it as CHSH1 (\ref{CHSH1}).  In this regard, we present a new game $\mathcal{G}$ (\ref{entg}). Then we prove that the game $\mathcal{G}$ is the entanglement version of the self-testing methodology presented in~\cite{tavakoli} and show how it boils down to CHSH1, and hence to the CHSH game (\ref{g}). We then propose a DI-QRNG protocol based on the self-testing game in Section~\ref{proto} and give the proof of the device-independence and the proof of the true-randomness extraction of the proposed protocol in Section~\ref{pprf}. In Section~\ref{nstg}, we propose a new self-testing game in preparation and measure scenario, and show that it can be deduced from the multi-party pseudo-telepathy game. We then present a second DI-QRNG protocol based on the new game in Section~\ref{NQRNG}. In Section~\ref{prf}, we present the proof of device independence and the proof of the true randomness extraction of the proposed protocol. In Section~\ref{compare} we compare our two DI-QRNGs with source independent and measurement device independent QRNG protocols in P\&M scenario. 
Section~\ref{conl} concludes the paper.
%In Section~\ref{compare} we compare our two DI-QRNGs with existing protocols in P\&M scenario. 
\section{Preliminaries}
\label{prelim}
In this paper, either CHSH or pseudo-telepathy games are exploited as distinguishers to put a boundary between the classical-quantum paradigm. In case of CHSH game, the maximum winning probability in the classical case is $0.75$. A value greater than this implies some quantumness in the device and the upper bound of the probability in quantum domain is $0.85$. Hence, the difference between the maximum probabilities of the classical and the quantum paradigms is $0.1$. Interestingly, this difference increases further in case of the pseudo-telepathy game. Here, the the maximum classical winning probability is $\frac{1}{2} + 2^{-\lceil \frac{n}{2}\rceil}$, where $n$ is the number of players. However, in case of quantum strategy, the winning probability reaches $1$, providing a better distinguishability, up to a difference of 0.5 between the maximum probabilities of the classical and the quantum paradigms. 
\subsection{CHSH Game}
\label{CHSH}

Here we first recall the CHSH game \cite{CHSH}. Then we analyze its quantum strategy and show how it enhances the winning probability.

In the CHSH game, there are two participants (say Alice and Bob) and one dealer. The dealer sends a bit $x$ to Alice and a bit $y$ to Bob. After getting $x$ Alice outputs a bit $a$ and after getting $y$ Bob outputs a bit $b$. They win the game if $x \wedge y = a \oplus b$ holds.

\subsubsection{Classical Strategy}
As we know the truth table of AND function consists of 75\% 0s, the classical strategy to win the game is that Alice and Bob always return the same bit, say 0 (it also works for 1).\\

Using the classical strategy, the winning probability is 0.75 and this is optimal for the classical world. But instead of this strategy if they follow the quantum strategy then the winning probability can be enhanced to 0.85. Next, we discuss the quantum strategy.

\subsubsection{Quantum Strategy}

To use the quantum strategy, Alice and Bob have to share a maximally entangled state of the form $\frac{1}{\sqrt{2}}(\ket{00}+\ket{11}) $ among themselves beforehand. Their game strategy is as follows:\\
~\\
\textbf{Quantum Strategy of CHSH game:}
\begin{enumerate}
    \item After getting $x$ from the dealer, Alice measures her part of the entanglement.
    \begin{itemize}
        \item If $x$ is $0$, she measures her part in $\{\ket{0},\ket{1}\}$ basis.
        \item If $x$ is $1$, she measures her part in $\{\ket{+},\ket{-}\}$ basis.
    \end{itemize}
    
    \item After measurement if Alice gets $\ket{0}$ or $\ket{+}$, she returns $a=0$, but if Alice gets $\ket{1}$ or $\ket{-}$, she returns $a=1$.
    
    \item  After getting $y$ from the dealer, Bob measures his part.
    \begin{itemize}
        \item If $y$ is $0$, then he measures his part in $\{\ket{\psi},\ket{\psi^{\perp}}\}$ basis, where $\ket{\psi} = \cos \frac{\pi}{8} \ket{0} + \sin \frac{\pi}{8} \ket{1}$ and $\ket{\psi^{\perp}} = -\sin \frac{\pi}{8} \ket{0} + \cos \frac{\pi}{8} \ket{1}$ (please check $\langle\psi|\psi^{\perp}\rangle=0$).
        \item If $y$ is $1$, then he measures his part in $\{\ket{\phi}, \ket{\phi^{\perp}}\}$ basis, where $\ket{\phi}= \sin \frac{\pi}{8} \ket{0} + \cos \frac{\pi}{8} \ket{1}$ and $\ket{\phi^{\perp}} = -\cos \frac{\pi}{8} \ket{0} + \sin \frac{\pi}{8} \ket{1}$ (please check $\langle\phi|\phi^{\perp}\rangle=0$).
    \end{itemize}
    
    \item After measurement, if Bob gets $\ket{\psi}$ or $\ket{\phi^{\perp}}$, he returns $b=0$, but if Bob gets $\ket{\psi^{\perp}}$ or $\ket{\phi}$, he returns $b=1$.
    
    \item They win the game if $x \wedge y = a \oplus b$ holds.
\end{enumerate}
~\\
Table~\ref{tab1} calculates the probabilities of getting $(a,b)$ after receiving $(x,y)$. We use similar types of tables for all probability calculations in this work.

\begin{center}
\begin{small}
    \begin{tabular}{||c|c|c|c||}
        \hline
        $~(x,y)~$ & $~(a,b)~$ &  $~Pr[(a,b) | (x,y)]~$ & $~Pr[x \wedge y = a \oplus b | (x,y)]~$\\
        \hline \hline
        \multirow{4}{*}{$(0,0)$} & (0,0) & $\frac{1}{4}(1 + \frac{1}{\sqrt{2}})$ & $\frac{1}{4}(1 + \frac{1}{\sqrt{2}})$\\
        \cline{2-4}
         & $(0,1)$ & $\frac{1}{4}(1 - \frac{1}{\sqrt{2}})$ & $0$\\
         \cline{2-4}
         & $(1,0)$ & $\frac{1}{4}(1 - \frac{1}{\sqrt{2}})$ & $0$\\
         \cline{2-4}
         & $(1,1)$ & $\frac{1}{4}(1 + \frac{1}{\sqrt{2}})$ & $\frac{1}{4}(1 + \frac{1}{\sqrt{2}})$\\
        \hline
        \multirow{4}{*}{$(0,1)$} & (0,0) & $\frac{1}{4}(1 + \frac{1}{\sqrt{2}})$ & $\frac{1}{4}(1 + \frac{1}{\sqrt{2}})$\\
        \cline{2-4}
         & $(0,1)$ & $\frac{1}{4}(1 - \frac{1}{\sqrt{2}})$ & $0$\\
         \cline{2-4}
         & $(1,0)$ & $\frac{1}{4}(1 - \frac{1}{\sqrt{2}})$ & $0$\\
         \cline{2-4}
         & $(1,1)$ & $\frac{1}{4}(1 + \frac{1}{\sqrt{2}})$ & $\frac{1}{4}(1 + \frac{1}{\sqrt{2}})$\\
        \hline
        \multirow{4}{*}{$(1,0)$} & (0,0) & $\frac{1}{4}(1 + \frac{1}{\sqrt{2}})$ & $\frac{1}{4}(1 + \frac{1}{\sqrt{2}})$\\
        \cline{2-4}
         & $(0,1)$ & $\frac{1}{4}(1 - \frac{1}{\sqrt{2}})$ & $0$\\
         \cline{2-4}
         & $(1,0)$ & $\frac{1}{4}(1 - \frac{1}{\sqrt{2}})$ & $0$\\
         \cline{2-4}
         & $(1,1)$ & $\frac{1}{4}(1 + \frac{1}{\sqrt{2}})$ & $\frac{1}{4}(1 + \frac{1}{\sqrt{2}})$\\
        \hline
        \multirow{4}{*}{$(1,1)$} & (0,0) & $\frac{1}{4}(1 - \frac{1}{\sqrt{2}})$ & $0$\\
        \cline{2-4}
         & $(0,1)$ & $\frac{1}{4}(1 + \frac{1}{\sqrt{2}})$ & $\frac{1}{4}(1 + \frac{1}{\sqrt{2}})$\\
         \cline{2-4}
         & $(1,0)$ & $\frac{1}{4}(1 + \frac{1}{\sqrt{2}})$ & $\frac{1}{4}(1 + \frac{1}{\sqrt{2}})$\\
         \cline{2-4}
         & $(1,1)$ & $\frac{1}{4}(1 - \frac{1}{\sqrt{2}})$ & $0$\\
        \hline
    \end{tabular}
    \captionof{table}{Winning probability of the CHSH game}
    \label{tab1}
\end{small}
\end{center}

So, in each case, the winning probability is $$\left(2 \cdot \frac{1}{4} \left(1 + \frac{1}{\sqrt{2}}\right)\right) = \frac{1}{2} \left(1 + \frac{1}{\sqrt{2}}\right) \approx 0.85.$$
\subsection{Multi-Party Pseudo-Telepathy Game}\label{mpstg}

Pseudo-Telepathy game can be played in various ways, even with two players~\cite{ptg2}. However, in the current initiative, we are considering the Greenberger–Horne–Zeilinger (GHZ) game which can not be played less than 3 players~\cite{mptg}.

Let us name all the players taking part in this game as $A_i$ for $i\in \{1,2,...,n\}$. After starting the game the dealer sends a bit $x_i$ to the player $A_i$ and at the end of the game the player $A_i$ has to return a bit $y_i$, for $i\in \{1,2,...,n\}$. Here the condition on the input bits $x_i$ is that \[\sum_{i=1}^{n} x_i = 0 ~\text{(mod 2).}\] The players win the game if and only if (in short, iff) the following condition is satisfied. \[\sum_{i=1}^{n} y_i = \frac{1}{2}\sum_{i=1}^{n} x_i ~\text{(mod 2)}\]
That means the winning condition is basically the following:
\[\sum_{i=1}^{n} y_i =\begin{cases}
\text{0}, & \text{if $\sum_{i=1}^{n} x_i = 0$ (mod 4)}\\
\text{1}, & \text{if $\sum_{i=1}^{n} x_i = 2$ (mod 4)}
\end{cases}
\]
No communication is allowed among the $n$ participants after receiving the inputs and before producing the outputs.
In the paper \cite{mptg}, it is shown that the winning probability of this game with classical strategies is at most $\frac{1}{2} + 2^{-\lceil \frac{n}{2}\rceil}$, where $n$ is the number of the players. However, the winning probability of this game with quantum strategies is $1$. For detail strategy one may consult the paper \cite{mptg}. Here, we briefly describe the strategy and show how this works.

 Define 
$$|\Phi_n^{+}\rangle = 
\frac{1}{\sqrt{2}} |0^n\rangle + \frac{1}{\sqrt{2}} |1^n\rangle$$ and
$$|\Phi_n^{-}\rangle = 
\frac{1}{\sqrt{2}} |0^n\rangle - \frac{1}{\sqrt{2}} |1^n\rangle.$$
$H$ denotes Hadamard transform. $S$ denotes the unitary transformation
$S|0\rangle \mapsto |0\rangle, \ S|1\rangle \mapsto i|1\rangle$.
If $S$ is applied to any two qubits of $|\Phi_n^{+}\rangle$ leaving the 
other qubits undisturbed then the resulting state is 
$|\Phi_n^{-}\rangle$ and vice versa.
If $|\Phi_n^{+}\rangle$ is distributed among $n$ players and if exactly
$m$ of them apply $S$ to their qubit, then the resulting global state will be 
$|\Phi_n^{+}\rangle$ if $m \equiv 0 \bmod 4$ and 
$|\Phi_n^{-}\rangle$ if $m \equiv 2 \bmod 4$.
Note that $$(H^{\otimes n})|\Phi_n^{+}\rangle
= \frac{1}{\sqrt{2^{n-1}}} \sum_{wt(y) \equiv 0 \bmod 2}|y\rangle$$
and $$(H^{\otimes n})|\Phi_n^{-}\rangle
= \frac{1}{\sqrt{2^{n-1}}} \sum_{wt(y) \equiv 1 \bmod 2}|y\rangle.$$
The players are allowed to share a prior entanglement, $|\Phi_n^{+}\rangle$. 
\begin{enumerate}
\item If $x_i = 1$, $A_i$ applies transformation $S$ to his qubit; otherwise 
he does nothing.
\item He applies $H$ to his qubit.
\item He measures his qubit in order to obtain $y$.
\item He produces $y_i$ as his output.
\end{enumerate}
Hence, the game $G_n$ is always won by the $n$ distributed parties without 
any communication among themselves.

So as the winning probability differences in two cases is almost $0.5$ for large number of $n$, it is easy to distinguish whether the classical strategy has been used or the quantum strategy has been used from the winning probability. In our paper, we denote the multi-party pseudo-telepathy game for three parties as $G_1$ and use it to design our new self-testing game.

\subsection{Revisiting the protocol by Tavakoli et al.~\cite{tavakoli}}
\label{revi}
In the paper \cite{tavakoli}, the authors have proposed a {\it preparation and measure} based protocol to test the device independence assuming an upper bound on the Hilbert space dimension. The protocol is as follows:\\

\begin{itemize}
    \item Consider a two-bit input $x = x_0 x_1$ and a one-bit input $y$, where $x_0, x_1, y \in \{0,1\}$.
    \item After receiving the bits $x_0, x_1,$ the protocol prepares some qubits and depending on $y$ it measures the qubits and returns a bit $b$.
    \item The algorithm is successful if the returned bit $b$ is the same as the input bit $x_y$.\\
\end{itemize}

In other words, we can say that the success probability of the algorithm is $$\mathcal{A} = \frac{1}{8} \sum_{x_0,x_1,y} Pr[b=x_y | x_0,x_1,y].$$

To achieve the optimality of the success probability, the authors have suggested the following strategy:\\
~\\
\textbf{Optimal Success Strategy for the protocol by Tavakoli et al.:}

\begin{enumerate}
    \item After receiving the bits $x_0, x_1$, the protocol prepares a qubit as follows:
    \begin{itemize}
        \item If $x$ is $00$ then the pure state is $\rho_{00} = \frac{\mathds{1} + \sigma_x}{2}.$ This is basically the qubit $\ket{+}$ as $\rho_{00}=\ket{+}\bra{+}$.
        \item If $x$ is $01$ then the pure state is $\rho_{01} = \frac{\mathds{1} + \sigma_z}{2}.$ This is basically the qubit $\ket{0}$ as $\rho_{01}=\ket{0}\bra{0}$.
        \item If $x$ is $10$ then the pure state is $\rho_{10} = \frac{\mathds{1} - \sigma_z}{2}.$ This is basically the qubit $\ket{1}$ as $\rho_{10}=\ket{1}\bra{1}$.
        \item If $x$ is $11$ then the pure state is $\rho_{11} = \frac{\mathds{1} - \sigma_x}{2}.$ This is basically the qubit $\ket{-}$ as $\rho_{11}=\ket{-}\bra{-}$.
    \end{itemize}
    
    \item Then depending on $y$, it chooses a measurement basis as follows:
    \begin{itemize}
        \item If $y$ is $0$, then it chooses $M_0 = \frac{\sigma_x + \sigma_z}{\sqrt{2}}$. This is actually a projection on eigenspace of the vector $\vec{v}.\vec{\sigma}$, where $\vec{v}=\frac{1}{\sqrt{2}}(1,0,1)$ and $\sigma_1=\sigma_x$, $\sigma_2=\sigma_y$ and $\sigma_3=\sigma_z$~\cite{nelson}. In other words, this is the measurement of qubits in $\{ \ket{\psi}, \ket{\psi^{\perp}}\}$ basis, where $\ket{\psi} = \cos \frac{\pi}{8} \ket{0} + \sin \frac{\pi}{8} \ket{1}$ and $\ket{\psi^{\perp}} = -\sin \frac{\pi}{8} \ket{0} + \cos \frac{\pi}{8} \ket{1}$.
        \item If $y$ is $1$, then it chooses $M_1 = \frac{\sigma_x - \sigma_z}{\sqrt{2}}$. This is a projection on eigenspace of the vector $\vec{v}.\vec{\sigma}$, where $\vec{v}=\frac{1}{\sqrt{2}}(1,0,-1)$. On other words, we are actually measuring the qubit in $\{ \ket{\phi}, \ket{\phi^{\perp}}\}$ basis, where $\ket{\phi} = \sin \frac{\pi}{8} \ket{0} + \cos \frac{\pi}{8} \ket{1}$ and $\ket{\phi^{\perp}} = -\cos \frac{\pi}{8} \ket{0} + \sin \frac{\pi}{8} \ket{1}$.
    \end{itemize}
    
    \item The returned bit $b$ is $0$, if after measurement it gets $\ket{\psi}$ or $\ket{\phi}$. The bit $b$ is $1$, if after measurement it gets $\ket{\psi^{\perp}}$ or $\ket{\phi^{\perp}}$. However, to calculate the value of $\mathcal{A}$, we  only consider the instances where $b=x_y$, i.e., $b$ takes the value of the $y$th bit of $x$.
\end{enumerate}
~\\
In Table~\ref{tavakoli}, we observe the probabilities of getting $b(=x_y)$ to calculate the value of $\mathcal{A}$.\\

\begin{center}
    \begin{tabular}{||c|c|c|c|c||}
        \hline
        $~x (x_0x_1)~$  & $~~\rho_x~~$ & $~y~$ & $~x_y~$ & $~Pr[b=x_y | x_0,x_1,y]~$\\
        \hline \hline
        \multirow{2}{*}{$00$} & \multirow{2}{*}{$\ket{+}$} & $0$ & $0$ & $\frac{1}{2}(1 + \frac{1}{\sqrt{2}})$\\
        \cline{3-5}
         &  & $1$ & $0$ & $\frac{1}{2}(1 + \frac{1}{\sqrt{2}})$\\
        \hline
        \multirow{2}{*}{$01$} & \multirow{2}{*}{$\ket{0}$} & $0$ & $0$ & $\frac{1}{2}(1 + \frac{1}{\sqrt{2}})$\\
        \cline{3-5}
         &  & $1$ & $1$ & $\frac{1}{2}(1 + \frac{1}{\sqrt{2}})$\\
        \hline
        \multirow{2}{*}{$10$} & \multirow{2}{*}{$\ket{1}$} & $0$ & $1$ & $\frac{1}{2}(1 + \frac{1}{\sqrt{2}})$\\
        \cline{3-5}
         &  & $1$ & $0$ & $\frac{1}{2}(1 + \frac{1}{\sqrt{2}})$\\
        \hline
        \multirow{2}{*}{$11$} & \multirow{2}{*}{$\ket{-}$} & $0$ & $1$ & $\frac{1}{2}(1 + \frac{1}{\sqrt{2}})$\\
        \cline{3-5}
         &  & $1$ & $1$ & $\frac{1}{2}(1 + \frac{1}{\sqrt{2}})$\\
        \hline
    \end{tabular}
    \captionof{table}{Winning probability of the protocol~\cite{tavakoli}}
    \label{tavakoli}
\end{center}
Now,
\begin{center}
    $\mathcal{A} = \frac{1}{8} (8 \times (\frac{1}{2}(1 + \frac{1}{\sqrt{2}})))$ $= \frac{1}{2}(1 + \frac{1}{\sqrt{2}})$ $\approx 0.85$.
\end{center}

In our protocol $\mathcal{P}$ towards {\it Device-Independent QRNG}, we have used the above optimal methodology.

\section{Equivalence of CHSH game to the Self-Testing Protocol by Tavakoli et al.~\cite{tavakoli}} 
\label{DI}
One of the  key result of our paper is that the CHSH game is the same as the self-testing game, but in
the CHSH game, an entangled state is required, whereas in the self-testing game,
there is no entanglement, and the preparation and measurement are independent. %(Theorem 1).
This is an intrinsic difference that we highlight and explain why they
are the same in this section.

%In their paper, Tavakoli et al. claims that their self-testing protocol does not depend on the violation of Bell's inequality. However, in this section we have shown that the CHSH game boils down to the entangled version of the self-testing protocol presented by Tavakoli et al. Hence  $\mathcal{A}$ can be used as a parameter for black-box testing. 

We have already discussed the CHSH game (\ref{CHSH}) and its quantum strategy. Here, we redefine the strategy slightly and show how it merges to the entangled version of the self-testing protocol by Tavakoli et al. Here, we exploit the unique property of the entanglement that if a subsystem is measured, the entanglement is broken and the counter subsystem will collapse to a quantum state depending on the measurement basis chosen for the former subsystem (Theorem 1).

\subsection{Defining CHSH1 from CHSH Game}
\label{CHSH1}
We have already seen the quantum strategy of the CHSH game. In the quantum strategy of the CHSH game whenever we get $x=0$ we use the $\{\ket{0},\ket{1}\}$ basis and we have used the $\{\ket{+},\ket{-}\}$ basis for $x=1$. But in the protocol of Tavakoli et al., for preparation of quantum states they have used $\{\ket{+},\ket{-}\}$ basis whenever $(x_0 \oplus x_1) = 0$ and $\{\ket{0},\ket{1}\}$ basis whenever $(x_0 \oplus x_1) = 1$. So, here we redefine the previous quantum strategy of CHSH game a little bit just for a easy reduction process. All the modifications are presented in italics font. Note that such changes do not affect the winning condition of the game. The modifications make it easy to establish the one-to-one mapping between the CHSH game and the methodology presented by Tavakoli et al.~\cite{tavakoli}. The redefined game strategy is as follows:\\
~\\
\textbf{Redefined Strategy for CHSH1 Game:}

\begin{enumerate}
    \item After getting $x$ from the dealer, Alice measures her part of the entanglement as follows.
    \begin{itemize}
        \item  {\em If $x$ is $0$, she measures her part in $\{\ket{+},\ket{-}\}$ basis.}
        \item {\em If $x$ is $1$, she measures her part in $\{\ket{0},\ket{1}\}$ basis.}
    \end{itemize}
    
    \item After measurement if Alice gets $\ket{0}$ or $\ket{+}$, she returns $a=0$, but if Alice gets $\ket{1}$ or $\ket{-}$, she returns $a=1$.
     
    \item Next Bob measures his part.
    \begin{itemize}
        \item If $y$ is $0$, then he measures his part in $\{\ket{\psi}, \ket{\psi^{\perp}}\}$ basis, where $\ket{\psi} = \cos \frac{\pi}{8} \ket{0} + \sin \frac{\pi}{8} \ket{1}$ and $\ket{\psi^{\perp}} = -\sin \frac{\pi}{8} \ket{0} + \cos \frac{\pi}{8} \ket{1}$ (as described earlier).
        \item If $y$ is $1$, then he measures his part in $\{\ket{\phi}, \ket{\phi^{\perp}}\}$ basis, where $\ket{\phi} = \sin \frac{\pi}{8} \ket{0} + \cos \frac{\pi}{8} \ket{1}$ and $\ket{\phi^{\perp}} = -\cos \frac{\pi}{8} \ket{0} + \sin \frac{\pi}{8} \ket{1}$ (as described earlier).
    \end{itemize}
     
    \item {\em After measurement, if Bob gets $\ket{\psi}$ or $\ket{\phi}$ he returns $b=0$, but if Bob gets $\ket{\psi^{\perp}}$ or $\ket{\phi^{\perp}}$ he returns $b=1$.}
     
    \item They win the game if $x \wedge y = a \oplus b$ holds.
\end{enumerate}
~\\
Here, in Table~\ref{redifine}, we calculate the winning probability with this redefined strategy.\\

\begin{center}
\begin{small}
    \begin{tabular}{||c|c|c|c||}
        \hline
        $~(x,y)~$ & $~(a,b)~$ &  $~Pr[(a,b) | (x,y)]~$ & $~Pr[x \wedge y = a \oplus b | (x,y)]~$\\
        \hline \hline
        \multirow{4}{*}{$(0,0)$} & (0,0) & $\frac{1}{4}(1 + \frac{1}{\sqrt{2}})$ & $\frac{1}{4}(1 + \frac{1}{\sqrt{2}})$\\
        \cline{2-4}
         & $(0,1)$ & $\frac{1}{4}(1 - \frac{1}{\sqrt{2}})$ & $0$\\
         \cline{2-4}
         & $(1,0)$ & $\frac{1}{4}(1 - \frac{1}{\sqrt{2}})$ & $0$\\
         \cline{2-4}
         & $(1,1)$ & $\frac{1}{4}(1 + \frac{1}{\sqrt{2}})$ & $\frac{1}{4}(1 + \frac{1}{\sqrt{2}})$\\
        \hline
        \multirow{4}{*}{$(0,1)$} & (0,0) & $\frac{1}{4}(1 + \frac{1}{\sqrt{2}})$ & $\frac{1}{4}(1 + \frac{1}{\sqrt{2}})$\\
        \cline{2-4}
         & $(0,1)$ & $\frac{1}{4}(1 - \frac{1}{\sqrt{2}})$ & $0$\\
         \cline{2-4}
         & $(1,0)$ & $\frac{1}{4}(1 - \frac{1}{\sqrt{2}})$ & $0$\\
         \cline{2-4}
         & $(1,1)$ & $\frac{1}{4}(1 + \frac{1}{\sqrt{2}})$ & $\frac{1}{4}(1 + \frac{1}{\sqrt{2}})$\\
        \hline
        \multirow{4}{*}{$(1,0)$} & (0,0) & $\frac{1}{4}(1 + \frac{1}{\sqrt{2}})$ & $\frac{1}{4}(1 + \frac{1}{\sqrt{2}})$\\
        \cline{2-4}
         & $(0,1)$ & $\frac{1}{4}(1 - \frac{1}{\sqrt{2}})$ & $0$\\
         \cline{2-4}
         & $(1,0)$ & $\frac{1}{4}(1 - \frac{1}{\sqrt{2}})$ & $0$\\
         \cline{2-4}
         & $(1,1)$ & $\frac{1}{4}(1 + \frac{1}{\sqrt{2}})$ & $\frac{1}{4}(1 + \frac{1}{\sqrt{2}})$\\
        \hline
        \multirow{4}{*}{$(1,1)$} & (0,0) & $\frac{1}{4}(1 - \frac{1}{\sqrt{2}})$ & $0$\\
        \cline{2-4}
         & $(0,1)$ & $\frac{1}{4}(1 + \frac{1}{\sqrt{2}})$ & $\frac{1}{4}(1 + \frac{1}{\sqrt{2}})$\\
         \cline{2-4}
         & $(1,0)$ & $\frac{1}{4}(1 + \frac{1}{\sqrt{2}})$ & $\frac{1}{4}(1 + \frac{1}{\sqrt{2}})$\\
         \cline{2-4}
         & $(1,1)$ & $\frac{1}{4}(1 - \frac{1}{\sqrt{2}})$ & $0$\\
        \hline
    \end{tabular}
   \end{small}
    \captionof{table}{Winning probability with redefined strategy}
    \label{redifine}
\end{center}

Here the table is the same as the previous Table~\ref{tab1}. So, in each case, the winning probability is $$\left(2 \cdot \frac{1}{4} \left(1 + \frac{1}{\sqrt{2}}\right)\right) = \frac{1}{2} \left(1 + \frac{1}{\sqrt{2}}\right) \approx 0.85.$$\\
Here basically the game (i.e., players win if $x \wedge y = a \oplus b$ holds) is same as CHSH game but as the quantum strategy is a little bit different, we call it  CHSH1. 

\subsection{An entangled version $\mathcal{G}$ of the self-testing protocol by Tavakoli et al.~\cite{tavakoli}}
\label{entg}
 In this section, we introduce a new game $\mathcal{G}$, defined as follows.
\begin{itemize}
 \item Let there be two participants (say, Alice and Bob) and one dealer. 
 \item Dealer sends a two-bit input $x = x_0 x_1$ to Alice and a one-bit input $y$ to Bob. 
 \item After getting $x = x_0 x_1$, Alice outputs a bit $a$ and after getting $y$, Bob outputs a bit $b$. 
 \item They win the game,  if $(x_0 \oplus x_1) \wedge y = a \oplus b$ holds.
\end{itemize}
~\\
\textbf{Optimal winning probability of the game $\mathcal{G}$ using Classical Strategy:~}\\

The maximum winning probability can not exceed 0.75 using any classical strategy. In fact, if we can find a classical strategy (say, Strategy 1) for this new game to win with a probability greater than 0.75 then we can design a classical strategy, using that strategy (Strategy 1), to win CHSH1 game with more than 0.75 probability as follows:\\
\begin{itemize}
    \item Get one-bit inputs $x$ and $y$.
    \item Randomly choose a bit $r \in \{0,1\}$, and set $x_0 = r$ and $x_1 = r \oplus x$.
    \item Use Strategy 1 to get outputs $a, b$ for inputs $(x_0 x_1), y$ such that $(x_0 \oplus x_1) \wedge y = a \oplus b$ holds.
    \item Since here $(x_0 \oplus x_1) \wedge y = a \oplus b \implies x \wedge y = a \oplus b$, just outputs $a,b$ as of Strategy 1.\\
\end{itemize}

If the winning probability of Strategy 1 is more than 0.75, then the winning probability of the classical strategy for CHSH1 game is also more than 0.75, which is a contradiction as CHSH1 game is equivalent with CHSH game. So, the maximum winning probability for the new game can not exceed 0.75 using any classical strategy.\\
~\\
\textbf{Quantum Strategy of the game $\mathcal{G}$:~}\\

In quantum strategy, like CHSH1 game, Alice and Bob share a maximally entangled state of the form $\frac{1}{\sqrt{2}}(\ket{00}+\ket{11})$. The optimal strategy for the game $\mathcal{G}$ is as follows:\\
~\\
\textbf{Optimal Quantum Strategy for Game $\mathcal{G}$:}

\begin{enumerate}
    \item After getting $(x_0,x_1)$ from the dealer, Alice computes $(x_0 \oplus x_1)$ and measure her part of the entanglement.
    \begin{itemize}
        \item  If $(x_0 \oplus x_1)$ is $0$, she measures her part in $\{\ket{+},\ket{-}\}$ basis.
        \item If $(x_0 \oplus x_1)$ is $1$, she measures her part in $\{\ket{0},\ket{1}\}$ basis.
    \end{itemize}
    
    \item After measurement if Alice gets $\ket{0}$ or $\ket{+}$, she returns $a=0$, but if Alice gets $\ket{1}$ or $\ket{-}$, she returns $a=1$.
    
    \item Next Bob measures his part.
    \begin{itemize}
        \item If $y$ is $0$, then he measures his part in $\{\ket{\psi}, \ket{\psi^{\perp}}\}$ basis, where $\ket{\psi} = \cos \frac{\pi}{8} \ket{0} + \sin \frac{\pi}{8} \ket{1}$ and $\ket{\psi^{\perp}} = -\sin \frac{\pi}{8} \ket{0} + \cos \frac{\pi}{8} \ket{1}$.
        \item If $y$ is $1$, then he measures his part in $\{\ket{\phi}, \ket{\phi^{\perp}}\}$ basis, where $\ket{\phi} = \sin \frac{\pi}{8} \ket{0} + \cos \frac{\pi}{8} \ket{1}$ and $\ket{\phi^{\perp}} = -\cos \frac{\pi}{8} \ket{0} + \sin \frac{\pi}{8} \ket{1}$.
    \end{itemize}
    
    \item After measurement, if Bob gets $\ket{\psi}$ or $\ket{\phi}$ he returns $b=0$, but if Bob gets $\ket{\psi^{\perp}}$ or $\ket{\phi^{\perp}}$ he returns $b=1$.
    
    \item They win the game if $(x_0 \oplus x_1) \wedge y = a \oplus b$ holds.
\end{enumerate}
~\\
The probabilities of winning the game $\mathcal{G}$ in various instances for the quantum strategy are shown in Table~\ref{table:1}.

\begin{widetext}
    \begin{minipage}{\linewidth}
\begin{center}
    \begin{tabular}{||c|c|c|c|c|c|c||}
        \hline
        &&~Alice's state~&~Bob's state~&~Bob's state~&&\\
        $~(x_0,x_1,y)~$&$~x_0\oplus x_1~$&~after~&~before~&~after~&$~(a,b)~$&$~\Pr[(x_0 \oplus x_1) \wedge y = a \oplus b | (x,y)]~$\\
        &&~measurement~&~measurement~&~measurement~&&\\
        \hline \hline
         &0&\ket{+}&\ket{+}&\ket{\psi}& $(0,0)$& $\frac{1}{4}(1 + \frac{1}{\sqrt{2}})$\\
        \cline{2-7}
        $(0,0,0)$ or&0&\ket{+}&\ket{+}&\ket{\psi^{\perp}}& $(0,1)$ & $0$\\
         \cline{2-7}
        $(1,1,0)~~~~$ &0&\ket{-}&\ket{-}&\ket{\psi}& $(1,0)$ & $0$\\
         \cline{2-7}
         &0&\ket{-}&\ket{-}&\ket{\psi^{\perp}}& $(1,1)$ & $\frac{1}{4}(1 + \frac{1}{\sqrt{2}})$\\
        \hline\hline
         &0&\ket{+}&\ket{+}&\ket{\phi}& $(0,0)$ & $\frac{1}{4}(1 + \frac{1}{\sqrt{2}})$\\
        \cline{2-7}
        $(0,0,1)$ or &0&\ket{+}&\ket{+}&\ket{\phi^{\perp}}& $(0,1)$ & $0$\\
         \cline{2-7}
        $(1,1,1)~~~~$ &0&\ket{-}&\ket{-}&\ket{\phi}& $(1,0)$ & $0$\\
         \cline{2-7}
         &0&\ket{-}&\ket{-}&\ket{\phi^{\perp}}& $(1,1)$ & $\frac{1}{4}(1 + \frac{1}{\sqrt{2}})$\\
         \hline\hline
         &1&\ket{0}&\ket{0}&\ket{\psi}& $(0,0)$ & $\frac{1}{4}(1 + \frac{1}{\sqrt{2}})$\\
        \cline{2-7}
        $(0,1,0)$ or &1&\ket{0}&\ket{0}&\ket{\psi^{\perp}}& $(0,1)$ & $0$\\
         \cline{2-7}
        $(1,0,0)~~~~$ &1&\ket{1}&\ket{1}&\ket{\psi}& $(1,0)$ & $0$\\
         \cline{2-7}
         &1&\ket{1}&\ket{1}&\ket{\psi^{\perp}}& $(1,1)$ & $\frac{1}{4}(1 + \frac{1}{\sqrt{2}})$\\
         \hline\hline
         &1&\ket{0}&\ket{0}&\ket{\phi}& $(0,0)$ & $0$\\
        \cline{2-7}
        $(0,1,1)$ or &1&\ket{0}&\ket{0}&\ket{\phi^{\perp}}& $(0,1)$ & $\frac{1}{4}(1 + \frac{1}{\sqrt{2}})$\\
         \cline{2-7}
        $(1,0,1)~~~~$ &1&\ket{1}&\ket{1}&\ket{\phi}& $(1,0)$ & $\frac{1}{4}(1 + \frac{1}{\sqrt{2}})$\\
         \cline{2-7}
         &1&\ket{1}&\ket{1}&\ket{\phi^{\perp}}& $(1,1)$ & $0$\\
         \hline
    \end{tabular}
    \captionof{table}{Winning probability of the game $\mathcal{G}$}
    \label{table:1}
\end{center}
\end{minipage}
\end{widetext}

One can easily show (Table~\ref{table:1}) that in each case (for any input $(x_0,x_1,y)$), the winning probability is\\ $2 \times \left(\frac{1}{4} \left( 1 + \frac{1}{\sqrt{2}}\right)\right) = \frac{1}{2} \left( 1 + \frac{1}{\sqrt{2}}\right) \approx 0.85$.\\
~\\
Here the winning cases can be divided into two categories. 
\begin{itemize}
    \item For $x_0 = a$, we get $x_y =b$.
    \item For $x_0 = a'$, we get $x_y =b'$.\\
\end{itemize}

As winning probabilities of both the cases are the same, so the game $\mathcal{G}$ with any one of the conditions is sufficient. Here we consider the first case only, i.e., $x_0 = a$, and $x_y =b$. This is because, in the paper \cite{tavakoli}, whenever $x=00$, the state is always $\ket{+}$. However, in game $\mathcal{G}$, whenever $x=00$ the prepared state at Bob's end (before Bob's measurement) can be $\ket{+}$ or $\ket{-}$. Depending on the Alice's measurement, i.e., depending on the value of $a$, the state will collapse to either $\ket{+}$ ($x_0=a$) or $\ket{-}$ ($x_0=a'$). The same will happen for $x=01$, $x=10$ and $x=11$. In the game $\mathcal{G}$, whenever $x_0 = a$ the prepared state at Bob's end matches with the prepared state of the protocol by Tavakoli et al., and we get $b=x_y$. That is why we consider the first case only.\\

\subsection{The Equivalence Proof}
\label{g}
We show the equivalence of CHSH1 and the self-testing presented by Tavakoli et al.~\cite{tavakoli} in two steps. First, we show an equivalence between $\mathcal{G}$ and the protocol of~\cite{tavakoli}. Next, we prove that $\mathcal{G}$ is equivalent to CHSH1.

\begin{theorem}
\label{theo}
The game $\mathcal{G}$ with the winning condition $x_0 = a$ and $x_y =b$ is the entangled version of the self-testing proposed by Tavakoli et al.~\cite{tavakoli}.
\end{theorem}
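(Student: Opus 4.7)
The plan is to exhibit a case-by-case correspondence between the post-measurement state on Bob's subsystem in $\mathcal{G}$ and the prepared state $\rho_x$ in the Tavakoli et al.\ protocol, under the conditioning $x_0 = a$, and then observe that Bob's measurement stage and success condition are already identical on both sides.

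First, I would invoke the standard fact that measuring one half of $\frac{1}{\sqrt{2}}(\ket{00}+\ket{11})$ in an orthonormal basis $\{\ket{u},\ket{u^{\perp}}\}$ collapses the other half to the same basis state (because the state is basis-invariant along the $\ket{00}+\ket{11}$ direction). Applied to $\mathcal{G}$: when $x_0\oplus x_1=0$, Alice measures in $\{\ket{+},\ket{-}\}$, so Bob's subsystem collapses to $\ket{+}$ if $a=0$ and $\ket{-}$ if $a=1$; when $x_0\oplus x_1=1$, Alice measures in $\{\ket{0},\ket{1}\}$, so Bob's subsystem collapses to $\ket{0}$ if $a=0$ and $\ket{1}$ if $a=1$.

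Next, I would enumerate the four values of $(x_0,x_1)$ under the restriction $x_0=a$ and verify directly that Bob's post-Alice state equals the Tavakoli preparation $\rho_{x_0 x_1}$: $(0,0)\!\mapsto\!\ket{+}=\rho_{00}$, $(0,1)\!\mapsto\!\ket{0}=\rho_{01}$, $(1,0)\!\mapsto\!\ket{1}=\rho_{10}$, $(1,1)\!\mapsto\!\ket{-}=\rho_{11}$. Since Bob's measurement bases $\{\ket{\psi},\ket{\psi^{\perp}}\}$ and $\{\ket{\phi},\ket{\phi^{\perp}}\}$ (for $y=0,1$ respectively) are defined identically in $\mathcal{G}$ and in \cite{tavakoli}, and the $b\mapsto 0,1$ decoding rule also coincides, the conditional distribution $\Pr[b\mid x_0,x_1,y,\, x_0=a]$ in $\mathcal{G}$ equals $\Pr[b\mid x_0,x_1,y]$ in the protocol of Tavakoli et al. The winning condition $x_y=b$ of $\mathcal{G}$ is the very success event $b=x_y$ of that protocol, so the success probabilities match term by term and both attain $\tfrac{1}{2}(1+\tfrac{1}{\sqrt{2}})$, as already computed in Table~\ref{table:1} and Table~\ref{tavakoli}.

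The only subtle step is the first one: I would be careful to note that the reduction of $\mathcal{G}$ to the prepare-and-measure protocol is \emph{modulo} the conditioning event $x_0=a$, and to justify why restricting to this branch is legitimate. The justification, already indicated in the text preceding the theorem, is that the two branches $x_0=a$ and $x_0=a'$ are symmetric and each realized with probability $\tfrac{1}{2}$, so conditioning preserves the per-input winning statistics; the branch $x_0=a$ is singled out solely because its prepared state on Bob's side literally coincides with Tavakoli's $\rho_{x_0 x_1}$. I expect this bookkeeping on which branch corresponds to which preparation, rather than any real calculation, to be the main place where the argument could be miswritten, so I would present the four-row table explicitly and then read off the equivalence.
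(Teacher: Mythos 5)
Your proposal is correct and follows essentially the same route as the paper's own proof: establish via the collapse of the maximally entangled state that, on the branch $x_0=a$, Bob's post-measurement state coincides with Tavakoli's preparation $\rho_{x_0x_1}$ for each of the four inputs, then note that the measurement stage and decoding rule are identical so that $\Pr[(x_0\oplus x_1)\wedge y=a\oplus b]=\mathcal{A}$. The only cosmetic difference is that you obtain the probability equality by matching the conditional distributions term by term, whereas the paper re-derives it through an explicit case analysis on $y$ and $x_0\oplus x_1$; both yield $\tfrac{1}{2}\left(1+\tfrac{1}{\sqrt{2}}\right)$.
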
\label{thp1}
\begin{proof}
According to the game $\mathcal{G}$, if $x_0=x_1$, then Alice measures her subsystem in $\{\ket{+},\ket{-}\}$ basis meaning either she gets $\ket{+}$ or $\ket{-}$ with probability $1/2$. Then Bob's state collapses either to $\ket{+}$ or to $\ket{-}$ depending on the measurement result of Alice. 

Similarly, when $x_0\neq x_1$, Alice measures her subsystem in $\{\ket{0},\ket{1}\}$ basis collapsing Bob's state either to $\ket{0}$ or to $\ket{1}$. 
In table \ref{proof}, we show how the quantum state is collapsing at Bob's place in the winning cases.\\ 

\begin{center}
    \begin{tabular}{||c||c|c||c||}
    \hline
        $~x_0 x_1~$ & $~x_0 \oplus x_1~$ & $~a (= x_0)~$ & ~Prepared State\\
        &&&~at Bob's end~ \\
        \hline
        $00$ & $0$ & $0$ & \ket{+}\\
        \hline
        $01$ & $1$ & $0$ & \ket{0}\\
        \hline
        $10$ & $1$ & $1$ & \ket{1}\\
        \hline
        $11$ & $0$ & $1$ & \ket{-}\\
        \hline
    \end{tabular}
  \captionof{table}{Bob's state before measurement in $\mathcal{G}$}
\label{proof}
\end{center}

From the table \ref{proof}, it is clear that the first phase of the game $\mathcal{G}$ boils down to the state preparation phase of the self-testing protocol, i.e.,
\begin{itemize}
    \item[] If $x_0x_1=00$, the prepared state is $\ket{+}$.
    \item[] If $x_0x_1=01$, the prepared state is $\ket{0}$.
    \item[] If $x_0x_1=10$, the prepared state is $\ket{1}$.
    \item[] If $x_0x_1=11$, the prepared state is $\ket{-}$.\\
    \end{itemize}
%These are exactly the cases that happen in the state preparation phase of~\cite{tavakoli}, i.e.,
%\begin{itemize}
%\item when $x_0x_1=00$, the state is $\ket{+}$.
%\item when $x_0x_1=01$, the state is $\ket{0}$.
%\item when $x_0x_1=10$, the state is $\ket{1}$.
%\item when $x_0x_1=11$, the state is $\ket{-}$.
%\end{itemize}
Now, we consider the measurement phase of Bob in the game $\mathcal{G}$. According to the game, 
\begin{itemize}
        \item If $y=0$, Bob measures his part in $\{\ket{\psi}, \ket{\psi^{\perp}}\}$ basis which implies the projection of the prepared state either on $\ket{\psi}$ or on $\ket{\psi^{\perp}}$. 
        \item If $y=1$, he measures his part in $\{\ket{\phi}, \ket{\phi^{\perp}}\}$ basis which implies the projection of the prepared state either on $\ket{\phi}$ or on $\ket{\phi^{\perp}}$.\\
\end{itemize}
This is exactly the same as the measurement phase of~\cite{tavakoli}. To show that we recall the steps below.
\begin{itemize}
        \item If $y$ is $0$, then the protocol chooses $M_0 = \frac{\sigma_x + \sigma_z}{\sqrt{2}}$ which is actually a projection on the eigenspace of the vector $\vec{v}.\vec{\sigma}$, where $\vec{v}=\frac{1}{\sqrt{2}}(1,0,1)$ and $\sigma_1=\sigma_x$, $\sigma_2=\sigma_y$ and $\sigma_3=\sigma_z$~\cite{nelson}. In this case, the eigenvectors are $\{ \ket{\psi}, \ket{\psi^{\perp}}\}$. 
        \item If $y$ is $1$, then it chooses $M_1 = \frac{\sigma_x - \sigma_z}{\sqrt{2}}$. This is a projection on eigenspace of the vector $\vec{v}.\vec{\sigma}$, where $\vec{v}=\frac{1}{\sqrt{2}}(1,0,-1)$. In this case, the eigenvectors are $\{ \ket{\phi}, \ket{\phi^{\perp}}\}$.
    \end{itemize}
Now, in~\cite{tavakoli}, the authors consider only those events where $(b=x_y|x_0,x_1,y)$. There are eight such possible cases. Hence,
\begin{eqnarray*}
\mathcal{A}= \frac{1}{8} \sum_{x_0,x_1,y} Pr[b=x_y | x_0,x_1,y].
\end{eqnarray*}
In case of the game $\mathcal{G}$, we consider $\Pr[(x_0\oplus x_1)\wedge y=a\oplus b]$ for $x_0 = a$. Our next job is to show that $\Pr[(x_0\oplus x_1)\wedge y=a\oplus b] = \mathcal{A}.$\\
To show this, we consider only the cases where $b=x_y$ in the game $\mathcal{G}$.
\begin{itemize}
\item {\bf \em {Case 1:}} When $y=0$:\\
In this case, $(x_0\oplus x_1)\wedge y=0$ always, whatever the value of $x_i, i\in \{0,1\}$. So to make the game successful, $a$ must be equal to $b$. Hence, we consider the cases, where $a=b=x_0$. Now we consider the probability in two sub-cases, whether $x_0 = x_1$ or $x_0 \neq x_1$. From Table\ref{table:1}, we get $\Pr[(x_0\oplus x_1)\wedge y=a\oplus b|x_0 = x_1, y=0]= [2\times \frac{1}{4}(1+\frac{1}{\sqrt{2}})]= \frac{1}{2} (1+\frac{1}{\sqrt{2}})$ and $\Pr[(x_0\oplus x_1)\wedge y=a\oplus b|x_0 \neq x_1, y=0]= [2\times \frac{1}{4}(1+\frac{1}{\sqrt{2}})]= \frac{1}{2} (1+\frac{1}{\sqrt{2}})$.
\item {\bf \em Case 2:} When $y=1$:\\
In this case, there are two sub-cases.
\begin{itemize}
    \item When $x_0\oplus x_1=0$:\\
    If $x_0\oplus x_1=0$, i.e., when $x_0=x_1$, then $(x_0\oplus x_1)\wedge y=0$. Hence, to make the game successful, $a$ must be equal to $b$. In Table~\ref{table:1}, we consider only the cases where $a=b=x_1$ and get the probability $\Pr[(x_0\oplus x_1)\wedge y=a\oplus b|x_0=x_1,y=1]=2\times \frac{1}{4}(1+\frac{1}{\sqrt{2}})=\frac{1}{2}(1+\frac{1}{\sqrt{2}})$.
    \item When $x_0\oplus x_1=1$:\\
    If $x_0\oplus x_1=1$, i.e., when $x_0\neq x_1$, then $(x_0\oplus x_1)\wedge y=1$. So to make the game successful, $a$ must be the complement of $b$. In this case, there are two possibilities
    \begin{itemize}
        \item $b=x_1=0$, and $a=x_0=1$. In this case, $\Pr[(x_0\oplus x_1)\wedge y=a\oplus b|x_0=1,x_1=0,y=1]= \frac{1}{4}(1+\frac{1}{\sqrt{2}})$.
        \item $b=x_1=1$, then $a=x_0=0$. In this case, $\Pr[(x_0\oplus x_1)\wedge y=a\oplus b|x_0=0,x_1=1,y=1]= \frac{1}{4}(1+\frac{1}{\sqrt{2}})$.
    \end{itemize}
\end{itemize}
\end{itemize}

Hence, the total probability $\Pr[(x_0\oplus x_1)\wedge y=a\oplus b$ is as follows:
\begin{eqnarray*}
&&Pr[(x_0\oplus x_1)\wedge y=a\oplus b)\\
&=& P(x_0 = x_1,y=0)\cdot\frac{1}{2} (1+\frac{1}{\sqrt{2}})\\
&+& P(x_0 \neq x_1,y=0)\cdot\frac{1}{2} (1+\frac{1}{\sqrt{2}})\\
&+& P(x_0=x_1,y=1)\cdot\frac{1}{2}(1+\frac{1}{\sqrt{2}})\\ 
&+& P(x_0 \neq x_1,y=1)[\frac{1}{4}(1+\frac{1}{\sqrt{2}}) +\frac{1}{4}(1+\frac{1}{\sqrt{2}})]\\
&=& \frac{1}{4}\cdot\frac{1}{2}(1+\frac{1}{\sqrt{2}}) +\frac{1}{4}\cdot\frac{1}{2}(1+\frac{1}{\sqrt{2}}) +\frac{1}{4}\cdot\frac{1}{2}(1+\frac{1}{\sqrt{2}})\\
&+& \frac{1}{4}[\frac{1}{4}(1+\frac{1}{\sqrt{2}}) +\frac{1}{4}(1+\frac{1}{\sqrt{2}})]\\
&=&\frac{1}{2}(1+\frac{1}{\sqrt{2}})\\
&\approx& 0.85
\end{eqnarray*}

Hence, we can conclude that $\Pr[(x_0\oplus x_1)\wedge y=a\oplus b] = \mathcal{A}$.
\end{proof}

\begin{theorem}
\label{theorem}
 Game $\mathcal{G}$ is equivalent to CHSH1.
\end{theorem}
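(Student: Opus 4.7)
The plan is to show equivalence via mutually strategy-preserving reductions between $\mathcal{G}$ and CHSH1, which then forces their classical and quantum optima to coincide.

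First, I would give a reduction from $\mathcal{G}$ to CHSH1: on input $(x_0, x_1, y)$, set $x := x_0 \oplus x_1$ and run the chosen CHSH1 strategy on $(x, y)$. Since the $\mathcal{G}$ winning predicate $(x_0 \oplus x_1) \wedge y = a \oplus b$ is literally the CHSH1 predicate $x \wedge y = a \oplus b$ under this identification, any CHSH1 strategy winning with probability $p$ induces a $\mathcal{G}$ strategy winning with the same probability. In the quantum case, the redefined CHSH1 strategy of Section~\ref{CHSH1} is recovered from the $\mathcal{G}$ quantum strategy of Section~\ref{entg} simply by observing that Alice's measurement basis in $\mathcal{G}$ depends only on $x_0 \oplus x_1$ and that Bob's measurements are identical in both protocols.

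Next, I would give the reverse reduction from CHSH1 to $\mathcal{G}$: on input $(x, y)$, Alice privately samples $r \in \{0,1\}$ uniformly, sets $x_0 := r$ and $x_1 := r \oplus x$ so that $x_0 \oplus x_1 = x$, and then runs the chosen $\mathcal{G}$ strategy on $(x_0, x_1, y)$. This is exactly the reduction already used in Section~\ref{entg} to upper-bound the classical winning probability of $\mathcal{G}$, and it extends verbatim to quantum strategies because Alice's extra local coin is independent of the shared entanglement and of Bob's side. Once again, the two winning predicates coincide and the winning probability is preserved. Combining the two reductions, the optimal classical winning probability of $\mathcal{G}$ must equal $\tfrac{3}{4}$ and its optimal quantum winning probability must equal $\tfrac{1}{2}\bigl(1 + \tfrac{1}{\sqrt{2}}\bigr)$, matching CHSH1; the explicit quantum strategy in Table~\ref{table:1} already attains the latter.

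I do not expect any substantive technical difficulty; the content of the theorem is essentially combinatorial. The most delicate points that I would be careful to spell out are: (i) that Alice's behaviour in $\mathcal{G}$ genuinely factors through $x_0 \oplus x_1$, so that both reductions are well-defined and preserve success probabilities strategy-for-strategy, and (ii) that the equivalence asserted in this theorem concerns only the core winning condition $(x_0 \oplus x_1) \wedge y = a \oplus b$, without the auxiliary constraint $x_0 = a$ that was introduced in Theorem~\ref{theo} to bridge $\mathcal{G}$ with the Tavakoli et al.\ protocol. Both points are immediate from the definitions in Sections~\ref{CHSH1} and~\ref{entg}.
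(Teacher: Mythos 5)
Your proposal is correct and follows essentially the same route as the paper: the paper's own proof of Theorem~\ref{theorem} consists precisely of the two input translations you describe ($x := x_0 \oplus x_1$ in one direction, and $x_0 := r$, $x_1 := r \oplus x$ with a fresh random bit $r$ in the other), together with the observation that all other components of the two games coincide. Your write-up is somewhat more careful than the paper's — in particular the explicit check that Alice's behaviour factors through $x_0 \oplus x_1$ and the remark that the auxiliary condition $x_0 = a$ plays no role here — but the underlying argument is the same.
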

\begin{proof}
Now, we show that CHSH1 and $\mathcal{G}$ are equivalent to each other. The only difference between $\mathcal{G}$ and CHSH1 (resp. CHSH) game is that $x$ is a single bit input in CHSH1 whereas in $\mathcal{G}$, $x=x_0x_1$, a two-bit input. However, given $x$, we can always select a random bit $x_0$ and can set $x_1=x\oplus x_0$ where $x, x_0, x_1\in\{0,1\}$. On the other hand, given the two bits $x_0x_1$ of $\mathcal{G}$, one can always reduce them to a single bit $x=x_0\oplus x_1$ of CHSH1. All other conditions of $\mathcal{G}$ are exactly the same as CHSH1. 
\end{proof}

We can conclude the following from Theorem~\ref{theo} and~\ref{theorem}.
 \begin{corollary}
 \label{cor}
The self-testing game (say $\mathcal{T}$) by Tavakoli et al., the game $\mathcal{G}$ and the CHSH1 game are all equivalent.
 \end{corollary}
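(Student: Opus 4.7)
The plan is to obtain this corollary as a direct transitivity argument on top of Theorems~\ref{theo} and~\ref{theorem}, so the proposal is essentially to chain the two equivalences together and verify that the notion of equivalence used in each is compatible.

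First I would record that by Theorem~\ref{theo}, the game $\mathcal{G}$ (restricted to the winning branch $x_0=a$, $x_y=b$) reproduces exactly the preparation states at Bob's side, the measurement bases, and the success probability $\mathcal{A}=\tfrac{1}{2}(1+\tfrac{1}{\sqrt 2})$ of $\mathcal{T}$. In other words, any run of $\mathcal{G}$ induces a run of $\mathcal{T}$ with the same statistics via Alice's measurement outcome playing the role of the classical label produced by the preparation device, and conversely a strategy for $\mathcal{T}$ can be lifted to $\mathcal{G}$ by having Alice prepare her half of $\tfrac{1}{\sqrt 2}(\ket{00}+\ket{11})$ and measure in the basis dictated by $x_0\oplus x_1$. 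This gives $\mathcal{G}\equiv\mathcal{T}$.

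Next I would invoke Theorem~\ref{theorem}, which provides a two-way reduction between $\mathcal{G}$ and CHSH1: given a CHSH1 input $x$, one splits it as $x=x_0\oplus x_1$ with $x_0$ uniformly random, and given a $\mathcal{G}$-input $(x_0,x_1)$ one collapses it to $x=x_0\oplus x_1$; Bob's side and the winning predicate are identical. Composing this with the previous step yields $\mathcal{T}\equiv\mathcal{G}\equiv$ CHSH1, which is the statement of Corollary~\ref{cor}. I would close by noting that CHSH1 is just CHSH under a relabeling of Alice's bases (Section~\ref{CHSH1}), so the whole chain ties Tavakoli et al.'s self-test to the standard CHSH game.

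The only subtlety, and what I would flag as the ``obstacle,'' is to make sure that ``equivalence'' is read uniformly across the two theorems. Theorem~\ref{theo} establishes an equality of success probabilities and a matching of prepared states/measurement statistics, while Theorem~\ref{theorem} gives a reduction at the level of inputs/outputs. I would therefore explicitly state the equivalence as: the optimal quantum winning probabilities coincide at $\tfrac{1}{2}(1+\tfrac{1}{\sqrt 2})$, the classical bounds coincide at $0.75$, and there are efficient input/output reductions in both directions that preserve the winning predicate. Once this common notion is pinned down, transitivity immediately gives the three-way equivalence, and no further computation is needed beyond what Theorems~\ref{theo} and~\ref{theorem} already supply.
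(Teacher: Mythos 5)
Your proposal matches the paper's treatment: the paper gives no separate proof of Corollary~\ref{cor}, simply asserting that it follows from Theorems~\ref{theo} and~\ref{theorem} by transitivity, which is exactly the chain $\mathcal{T}\equiv\mathcal{G}\equiv$ CHSH1 you construct. Your added care in pinning down a uniform notion of equivalence (matching quantum and classical optima plus two-way predicate-preserving reductions) is a reasonable tightening but not a departure from the paper's route.
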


\section{Our Device Independent QRNG protocol $\mathcal{P}$ deduced from the protocol by Tavakoli et al. \cite{tavakoli}}
\label{proto}

Exploiting the self-testing protocol given by Tavakoli et al., we design a device independent QRNG protocol. In this section, first we discuss the intuitive idea of the protocol and then we present it formally. We have already discussed the self-testing protocol by Tavakoli et al. in preparation and measure scenario. Now we add an extra measurement in $\{\ket{+},\ket{-}\}$ basis in the motivation towards generating a random bit string. This is because whenever we measure $\ket{0}$ or $\ket{1}$ in $\{\ket{+},\ket{-}\}$ basis, the probability of getting $\ket{+}$ or $\ket{-}$ is exactly $\frac{1}{2}$. This is the main tweak of the proposed protocol. So, for the protocol, we take $y \in \{0,1,2\}$, and whenever we get $y=2$, we measure the qubit in $\{\ket{+},\ket{-}\}$ basis. After the measurement, if we get $\ket{+}$ then we  return the classical bit $b=0$ and if we get $\ket{-}$ then we return $b=1$. Hence, whenever we get $x=x_0x_1 \in \{01,10\}$ and $y=2$ we can produce a random bit. For $y\in \{0,1\}$, the corresponding $b$ bits are exploited for self-testing. However, for $x=x_0x_1 \in \{00,11\}$ and $y=2$, a deterministic measurement will happen as $\ket{+}$ or $\ket{-}$ is measured in $\{\ket{+},\ket{-}\}$ basis. Though the produced $b$ bit does not have any randomness in this case, however, those will be used to prove device independence security.

In this protocol we take three different storage spaces (we can think it as an array) to store the three different types of $b$ bits (along with the corresponding tuple $(x_0,x_1,y)$) as follows:\\
\begin{itemize}
\item After storing $b$ and its corresponding $x_0,x_1,y$, we further create three storages to separate the three different events. 
\item $Check$: Whenever $y\in \{0,1\}$ (does not depends on $x$ i.e., $x$ can be anything from the set $\{00,01,10,11\}$), we store the corresponding $b$ bits along with the tuple $(x_0,x_1,y)$ in this storage space.
\item $Rand$: Whenever $x \in \{01,10\}$ and $y=2$, we store the corresponding $b$ bits in this storage space.
\item $False$: Whenever $x \in \{00,11\}$ and $y=2$, we put the corresponding $b$ bits in this storage space.
\item This part is considered as classical post processing part.
\end{itemize}
~\\
The protocol is based on some basic assumptions. Those are enumerated below.\\
\begin{enumerate}
\item The devices make no use of any prior information about the choice of settings $x$ and $y$. (memory loophole)
\item Internal states of the devices are independent and identically distributed (i.i.d).
\item The preparation and measurement devices are independent. (locality loophole)\\
\end{enumerate}
Note that the assumptions are same as the assumptions considered in~\cite{brunner14}. However, there is another assumption that the devices are provided by the trusted vendors. In our case, we remove that assumption by proving that violation of Bell's inequality, randomness extraction and deterministic measurement are not possible simultaneously for any classical correlation in no signalling setting.\\

We are now going to describe our protocol $\mathcal{P}$. It is devided in two parts, quantum part and classical post-processing part.\\
~\\
{\bf Quantum Part of $\mathcal{P}$:}
\begin{enumerate}
\item The algorithm $A$ (device $D_1$) can accept the inputs $x$, where $x=x_0x_1$, $x_i\in\{0,1\}$; $i=\{0,1\}$.

\item The algorithm follows the rule of the state preparation described in~\cite{tavakoli}, i.e., 
  \begin{itemize}
        \item If $x$ is $00$ then the pure state is $\rho_{00} = \frac{\mathds{1} + \sigma_x}{2}$. In other words, $\rho_{00}=\ket{+}\bra{+}$.
        \item If $x$ is $01$ then the pure state is $\rho_{01} = \frac{\mathds{1} + \sigma_z}{2}.$ In other words, $\rho_{01}=\ket{0}\bra{0}$.
        \item If $x$ is $10$ then the pure state is $\rho_{10} = \frac{\mathds{1} - \sigma_z}{2}.$ In other words, $\rho_{10}=\ket{1}\bra{1}$.
        \item If $x$ is $11$ then the pure state is $\rho_{11} = \frac{\mathds{1} - \sigma_x}{2}.$ In other words, $\rho_{11}=\ket{-}\bra{-}$.
    \end{itemize}
    
    \item  The algorithm $B$ (device $D_2$) can accept the input $y$, where $y\in\{0,1,2\}$, and generates an output $b\in\{0,1\}$.
    
\item Depending on $y$, the algorithm $B$ selects the measurement operators as follows.
\begin{itemize}
\item If $y=0$, $B$ chooses $M_0=\frac{1}{\sqrt{2}}(\sigma_x+\sigma_z)$.
\item  If $y=1$, $B$ chooses $M_1=\frac{1}{\sqrt{2}}(\sigma_x-\sigma_z)$.
\item  If $y=2$, $B$ selects $M_2= \sigma_x$.
\end{itemize}

\item $b$ is $0$, if the algorithm measures $\ket{\psi}$ or $\ket{\phi}$ or $\ket{+}$, where $\ket{+}=\frac{1}{\sqrt{2}}(\ket{0}+\ket{1})$ (one of the eigenstates of $\sigma_x$).

\item $b$ is $1$, if the algorithm measures $\ket{\psi^{\perp}}$ or $\ket{\phi^{\perp}}$ or $\ket{-}$, where $\ket{-}=\frac{1}{\sqrt{2}}(\ket{0}-\ket{1})$ (one of the eigenstates of $\sigma_x$).

\item Algorithm $A$ and $B$ repeat $n$ many times, where $n\rightarrow \infty$.
\end{enumerate}
~\\
{\bf Classical Post-Processing part of $\mathcal{P}$:}

\begin{enumerate}
\item After quantum part ends, $A$ exchanges the inputs $x=x_0x_1$ with $B$.

\item The produced $b$ bits (along with the corresponding tuple $(x_0,x_1,y)$) are stored in three different memory locations: $Check$, $Rand$, and $False$ based on the following specifications.
\begin{itemize}
\item For $Check$ bits, $x_0,x_1,y\in \{0,1\}$.
\item For $Rand$ bits, $x_0x_1=\{01,10\}$, $y=2$.
\item For $False$ bits, $x_0x_1=\{00,11\}$, $y=2$.
\end{itemize}

\item $Check$ bits are used to calculate the probability $\mathcal{A}$. If it is $0.85$, the protocol $P$ continues, is aborted otherwise.

\item $Rand$ bits are used as truly random bit sequences.

\item $False$ bits are used to check device independence security.
\end{enumerate}
~\\
\begin{remark}
From the description of our protocol $\mathcal{P}$, one can notice that whenever the input to the algorithm $B$ is $y=2$, we are generating the random numbers. Also one can notice that whenever the input to the algorithm $A$ is $x \in \{00, 11\}$, we cannot generate the random numbers.

If one wishes to buy a device implementing the protocol $\mathcal{P}$, one can run the whole protocol $\mathcal{P}$  several times, until one is fully satisfied that the device is working desirably. After this, they can focus on the random number generation part only. And for only generating the random bit string, the buyer will choose the input $x$ from the set $\{01, 10\}$, and will set $y=2$.

That means, though it seems that out of $12$ possible inputs only $2$ input patterns are useful for randomness generation (making the rate of randomness generation $\frac{1}{6}$), actually once the testing phase is over and the buyer is fully satisfied by the device independence of the protocol, they can modify the input set to make the rate of randomness generation $1$ (modified input set is $x \in \{01, 10\}$ and $y=2$).
\end{remark}

\section{Security Proofs of the QRNG Protocol $\mathcal{P}$}
\label{pprf}
In this section, we prove the device independence security of our protocol $\mathcal{P}$ followed by the proof of randomness generation. In device independence formulation, all quantum devices used in a protocol are considered as black boxes with classical inputs and outputs only. Depending on the input-output statistics, the users should obtain the confidence  about the devices. In this setting, a protocol is called ``secure" if the protocol includes a practical test that guarantees that the users’ quantum devices behave according to the specification, even in the scenario where the devices may have been manufactured by an adversary. That is, we remove the assumption that the devices are supplied by trusted providers. This is the fundamental difference between our protocol $\mathcal{P}$ and the self-testing protocol suggested in~\cite{brunner14}.
\subsection{Proof of Device Independence of our protocol $\mathcal{P}$}
To prove the device independence of our protocol $\mathcal{P}$, we follow the proof technique provided in~\cite{vagiraniacm}. We consider a single randomly chosen round of the protocol. We will prove that if the protocol succeeds, then with high probability the devices’ output in the chosen round must be at least partially unknown to the adversary. In this regard, we will take help of a two player guessing game where one of the participant plays the role of the adversary, i.e., we consider that the respective devices of the player are totally compromised. We will show that the maximum success probability in the guessing game is bounded by the virtue of no signalling principle. Precisely, in the absence of any communication between the players, the output distribution of one player must be independent of the input of the other player. 
To do so, we first show that protocol $\mathcal{P}$ is an augmented CHSH game in Lemma~\ref{lemma1}. In Theorem~\ref{theo1} and~\ref{theo2}, we prove that under no signalling condition the probability to guess the input bit of a player by the other player should not be more that $\frac{1}{2}$. 

\begin{lemma}
\label{lemma1}
Protocol $\mathcal{P}$ is an augmented CHSH game for which the probability of success is  $\frac{2}{3}\omega+\frac{1}{3}$, where $\omega=\cos^2{\pi/8}$. 
\end{lemma}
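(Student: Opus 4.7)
The plan is to view $\mathcal{P}$ as the CHSH-type game certified in Corollary~\ref{cor} augmented by one extra measurement setting $y=2$, and then to compute the overall success probability as a weighted average over the two branches $y\in\{0,1\}$ and $y=2$. The first thing I would do is enumerate the $4\times 3=12$ input tuples $(x_0,x_1,y)$, which by the i.i.d./no-memory assumptions of $\mathcal{P}$ are drawn uniformly, and split them into the $8$ tuples with $y\in\{0,1\}$ (the \emph{Check} branch) and the $4$ tuples with $y=2$ (the \emph{Rand}/\emph{False} branch). Writing the success probability as
\begin{equation*}
\mathrm{Succ}(\mathcal{P})\;=\;\tfrac{8}{12}\,\mathrm{Succ}_{y\in\{0,1\}}+\tfrac{4}{12}\,\mathrm{Succ}_{y=2},
\end{equation*}
immediately isolates the two contributions that need to be evaluated separately.

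For the first branch I would appeal directly to Corollary~\ref{cor}: the sub-protocol obtained from $\mathcal{P}$ by restricting to $y\in\{0,1\}$ is, by construction, the Tavakoli et al.\ self-test $\mathcal{T}$, which is equivalent to CHSH1 and hence to the CHSH game. Its quantum-optimal winning probability is $\frac{1}{2}(1+\tfrac{1}{\sqrt{2}})=\cos^2(\pi/8)=\omega$, exactly as recomputed in Table~\ref{tavakoli}. This gives $\mathrm{Succ}_{y\in\{0,1\}}=\omega$ with no additional calculation.

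For the second branch I would analyze the effect of $M_2=\sigma_x$ on each prepared state $\rho_x$ separately. When $x\in\{00,11\}$ the state $\rho_x=\ket{\pm}\bra{\pm}$ is an eigenstate of $\sigma_x$, so the measurement outcome $b$ is deterministic and matches the preparation label with probability $1$; these are precisely the \emph{False} bits whose predictability underwrites the device-independence argument later on. When $x\in\{01,10\}$ the state $\rho_x=\ket{j}\bra{j}$ with $j\in\{0,1\}$ lies in the mutually unbiased basis, so $|\langle\pm|j\rangle|^{2}=\tfrac{1}{2}$; the produced bit is always a legitimate \emph{Rand} output, so the augmented winning condition is again satisfied with probability $1$. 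Hence $\mathrm{Succ}_{y=2}=1$, and plugging into the display above yields $\tfrac{2}{3}\omega+\tfrac{1}{3}$.

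The one genuinely delicate point, which I would address at the start of the proof, is making precise what the \emph{augmented} winning condition is for $y=2$: the CHSH-style predicate $(x_0\oplus x_1)\wedge y=a\oplus b$ is no longer meaningful because there is no Alice output $a$, so I would formally define the augmented predicate to accept every output in the $y=2$ round (splitting the acceptance into the deterministic \emph{False} case and the uniformly random \emph{Rand} case). Once this definition is in place, the rest is a uniform-average computation, and the stated identity $\mathrm{Succ}(\mathcal{P})=\tfrac{2}{3}\omega+\tfrac{1}{3}$ follows.
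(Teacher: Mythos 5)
Your decomposition and arithmetic match the paper's: weight $\tfrac{2}{3}$ on the $y\in\{0,1\}$ rounds with CHSH success $\omega$ (via Corollary~\ref{cor}), weight $\tfrac{1}{3}$ on the $y=2$ rounds with success $1$, giving $\tfrac{2}{3}\omega+\tfrac{1}{3}$. The problem is in the step you yourself flag as delicate and then resolve incorrectly. You declare that ``there is no Alice output $a$'' in the $y=2$ round and therefore define the augmented predicate to \emph{accept every output}. That is not the augmented CHSH game of~\cite{vagiraniacm}, and it is not what the paper does. The augmentation adds a nontrivial winning condition: on the input pair $(x',y)=(0,2)$ (i.e.\ $x\in\{00,11\}$, $y=2$) the players win iff $c=a$, where $a$ is the bit labeling the prepared eigenstate of $\sigma_x$ --- exactly the role played by Alice's outcome in the entangled version $\mathcal{G}$ established in Theorem~\ref{theo}, and exactly what the \emph{False} bits verify. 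The honest strategy satisfies $c=a$ with probability $1$ because $\ket{\pm}$ is measured in its own eigenbasis, which is why the branch contributes $\tfrac{1}{3}\cdot 1$; but for a dishonest strategy this condition is a real constraint.

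With your ``accept everything'' predicate the number $\tfrac{2}{3}\omega+\tfrac{1}{3}$ still comes out for the honest strategy, so the lemma's numerical claim survives, but the lemma becomes useless for its intended purpose: Theorems~\ref{theo1} and~\ref{theo2} start from the premise that success in the augmented game forces $c_2=a_0$, i.e.\ $d_H(a_0,c_2)=0$, and combine this with the CHSH statistics to derive a no-signalling contradiction for any guessing strategy. A vacuously satisfied predicate yields no such constraint. So the gap is not in the computation but in identifying the correct augmented winning condition; you need the $c=a$ check on $(x',y)=(0,2)$ (the deterministic \emph{False} branch), not a blanket acceptance of the whole $y=2$ branch.
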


\begin{proof}
An augmented CHSH game~\cite{vagiraniacm}, is defined as a CHSH game with an extra input $x=2$ at Alice's place and an additional winning condition, i.e., on the input $(x,y)=(2,0)$, Alice and Bob will win the game iff $c=a$, where $c$ is the output at Bob's end on the input $x=2$ at Alice's end.

In our protocol $\mathcal{P}$, we consider the extra input at Bob's side, i.e., $y=2$. As the game is symmetric, we can change Alice to Bob and Bob to Alice. Also in our case $x=x_0x_1$, and the winning condition is $(x_0 \oplus x_1) \cdot y = a \oplus b$ instead of $x \cdot y = a \oplus b$ (check the section \ref{g}). For the simplicity of the proof, we consider $x'=(x_0 \oplus x_1)$. So, from now on whenever we mention $x' = 0$, it means $x \in \{00, 11\}$, and $x' = 1$ means $x \in \{01, 10\}$. Now, we assume that whenever $y=2$, algorithm $B$ will output $c$. We know when $(x',y)=(0,2)$ the base in which the state is prepared by algorithm $A$ and the base which is chosen by algorithm $B$ is same. As the states are measured in their eigen basis, we will get the deterministic result, i.e., if the state is $\ket{+}$ and for that if we assign $a=0$, after measurement $c=0$. Similarly, if the state is $\ket{-}$ and the assigned value for $a=1$, then after measurement $c=1$. Thus, in this case Alice and Bob will always win the game.

Now, we know that for actual CHSH game the success probability is $\omega$ on the inputs $x\in\{0,1\}$ and $y\in\{0,1\}$ and for the augmented CHSH game, on the inputs $(2,0)$, we will always get $c=a$. Thus, the total success probability for the augmented CHSH game is $\frac{2}{3}\omega+\frac{1}{3}$. We already proved that for $x',y \in\{0,1\}$, the protocol $\mathcal{P}$ boils down to the CHSH game and now we showed that for $(x',y)=(0,2)$, it can be reduced to the extra condition of the augmented CHSH game and the success probability is $1$ for this case resulting the total success probability as $\frac{2}{3}\omega+\frac{1}{3}$.
\end{proof}

\begin{theorem}
\label{theo1}
 With reference to Lemma~\ref{lemma1}, algorithm $B$ can not guess the inputs of algorithm $A$ with probability $\frac{1}{2}+\epsilon$, where $\epsilon$ is a non-negligible small fraction.
\end{theorem}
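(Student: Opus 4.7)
My plan is to reduce the statement to a no-signaling argument. First I would invoke Lemma~\ref{lemma1} to view Protocol~$\mathcal{P}$ as an augmented CHSH game, and then use Corollary~\ref{cor} to pass from the preparation-and-measure picture to the equivalent entangled picture of game $\mathcal{G}$. In that picture algorithm $A$ and algorithm $B$ hold the two halves of the shared maximally entangled pair $\tfrac{1}{\sqrt 2}(\ket{00}+\ket{11})$ and perform strictly local operations: $A$'s measurement is determined by $x=x_0x_1$ alone, while $B$'s measurement is determined by $y$ alone. The locality loophole assumption stated in Section~\ref{proto} (the preparation and measurement devices are independent and exchange no classical communication during a round) then justifies invoking the standard non-signaling identity
\begin{equation*}
\Pr[b \mid x, y] \;=\; \Pr[b \mid y],
\end{equation*}
i.e.\ the marginal output statistics of $B$ cannot depend on $A$'s setting.

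Second, I would let $g = g(b,y)$ denote any (possibly randomized) function that algorithm $B$ computes as its guess for $x' := x_0 \oplus x_1$. Since $g$ is a post-processing of $(b,y)$, the no-signaling identity lifts to $\Pr[g \mid x', y] = \Pr[g \mid y]$. The i.i.d.\ assumption and the fact that $x_0, x_1$ are chosen uniformly and independently imply that $x'$ is uniform on $\{0,1\}$, so
\begin{equation*}
\Pr[g = x'] \;=\; \tfrac{1}{2}\,\Pr[g = 0 \mid y] + \tfrac{1}{2}\,\Pr[g = 1 \mid y] \;=\; \tfrac{1}{2}.
\end{equation*}
Hence under the stated assumptions any guessing strategy available to $B$ achieves success probability at most $\tfrac{1}{2}$ exactly; the only possible deviation over $n$ i.i.d.\ rounds is a statistical fluctuation of order $1/\sqrt{n}$, which is negligible in the limit $n\to\infty$ that the protocol invokes. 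The success probability $\tfrac{2}{3}\omega+\tfrac{1}{3}$ of the augmented CHSH game obtained in Lemma~\ref{lemma1} plays no direct role here beyond certifying that the devices really do satisfy the no-signaling constraint (so the reduction to game~$\mathcal{G}$ is legitimate and not spoofed by a hidden channel).

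The hard part will be making the first step airtight: in a genuine preparation-and-measure setting $A$ does transmit a quantum state to $B$, so one cannot naively invoke ``no-signaling'' between them. The reformulation through game~$\mathcal{G}$ (Theorem~\ref{theo} and Corollary~\ref{cor}) is exactly the device that lets us do so, because in that picture the state on $B$'s side is obtained by a purely local collapse triggered by $A$'s measurement, and the reduced density matrix at $B$ before measurement is $\tfrac{1}{2}I$ independently of $x'$ whenever $x_0$ is uniform. I would explicitly verify this last point to certify that the reduction preserves the joint input-output distribution, so that the bound $\Pr[g=x']\le \tfrac{1}{2}+\epsilon$ carries back to the original protocol $\mathcal{P}$ and establishes the theorem.
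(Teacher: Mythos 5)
Your overall strategy (reduce to a no-signaling bound on a two-party guessing game) is the same in spirit as the paper's, but the route you take through the marginal identity $\Pr[b\mid x,y]=\Pr[b\mid y]$ has a genuine gap. In the preparation-and-measure protocol $\mathcal{P}$ that identity is simply false: device $D_1$ physically transmits a state that depends on $x$, and for $y=2$ the outputs on $x=00$ and $x=11$ are deterministic and opposite (these are exactly the $False$ bits the protocol records), so $B$'s output distribution manifestly depends on $A$'s input. You notice this and patch it by (i) restricting the guess to $x'=x_0\oplus x_1$ and (ii) arguing that for uniform $x_0$ the transmitted ensemble has reduced density matrix $\tfrac{1}{2}I$ for both values of $x'$. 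But step (ii) is a property of the \emph{honest} state preparation; in a device-independence argument you are not entitled to assume the preparation box emits $\ket{+},\ket{0},\ket{1},\ket{-}$ --- an adversarial $D_1$ could send $\ket{x'}$ directly and let $D_2$ read $x'$ off perfectly. What rules that out is the observed test statistics, so the success probability $\tfrac{2}{3}\omega+\tfrac{1}{3}$ from Lemma~\ref{lemma1} must do real work; it cannot merely ``certify that the devices satisfy no-signaling'' as your proposal asserts.

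That is precisely where the paper's proof differs. It extracts constraints from the observed statistics: writing $a_{x'}$ for the bit encoded by $A$ and $b_y,c_2$ for $B$'s outputs, the augmented-CHSH success forces $d_H(a_0,c_2)=0$ and $d_H(a_1,c_2)=1/2$, hence $d_H(a_0,a_1)=1/2$ on $y=2$, while the CHSH rounds force $d_H(a_0,a_1)\gtrsim 2\omega-1>1/2$ by the triangle inequality; the incompatibility of these two requirements shows $B$ cannot hold a classical copy of $x'$. The $\tfrac{1}{2}$ bound itself is then obtained, much as in your last step, by reduction to the trivial guessing game under no-signaling. If you want to retain your cleaner marginal argument, you must first derive the $\tfrac{1}{2}I$ indistinguishability of the two ensembles (or an approximate version of it) from the Check-round statistics rather than from the honest-device description; otherwise the theorem you prove is about the ideal implementation, not about untrusted devices.
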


\begin{proof}
Start with Lemma~\ref{lemma1}, let denote the output bit of algorithm $B$ over the input $y=2$ as $c_2$, and $b_y$ over the inputs $y\in\{0,1\}$. Now, as the protocol $\mathcal{P}$ succeeds in augmented CHSH game, when $(x',y)=(0,2)$, $c_2=a_0$, where $a_0$ stands for the output bit supposed to be generated by algorithm $A$ over the input $x'=0$ (readers are suggested to read the reduction proof where it is shown that CHSH game can be reduced to the self-testing protocol given in~\cite{tavakoli} in section \ref{g}). Hence, the relative Hamming distance between $a_0$ and $c_2$ is $d_H(a_0,c_2)=0$. Similarly, according to the protocol for the inputs $(x,y)=(1,2)$, $d_H(a_1,c_2)=1/2$, where $a_1$ is the output bit supposed to be generated by algorithm $A$ over the input $x'=1$. Hence, we can write $d_H(a_0,a_1)=1/2$ on $y=2$. 

Now, in case of CHSH round, i.e., over the inputs $x',y\in\{0,1\}$, $d_H(a_0,b_0)=1-\omega$, $d_H(a_0,b_1)=1-\omega$, $d_H(a_1,b_0)=1-\omega$ and $d_H(a_1,b_1)=\omega$. By the triangle inequality, we get $d_H(a_0,a_1)\gtrsim 2\omega -1$.

Now, we have two conditions; 1) for $x',y\in\{0,1\}$, $d_H(a_0,a_1)\gtrsim 2\omega-1$ and 2) for $x'\in\{0,1\}, y=2$, $d_H(a_0,a_1)=1/2$. Any such correlations cannot be created classically without any prior information of $x'$ to algorithm $B$. In other words, without any communication between the algorithms, more precisely, between the devices which are separated from each other, above two conditions cannot be satisfied simultaneously. This clearly violets the no signalling principle. So algorithm $B$ can not guess the inputs of algorithm $A$ with probability $1$.

Now, we prove that the guessing probability is at most $1/2$. For that, we consider the following trivial guessing game. In the game, given an input $x'\in\{0,1\}$ to one player, other player has to output a bit $b$ (say). The players are allowed to perform any arbitrary computations, but they are not allowed to communicate among themselves. They will win the game if $b=x'$. In this game, the maximum success probability is $1/2$. Any strategy with success probability $\frac{1}{2}+\epsilon$ indicates a violation of no signalling assumption between the players. In our case, if algorithm $B$ can predict the inputs of algorithm $A$ with probability $\frac{1}{2}+\epsilon$, then it clearly violates the no signalling principle. This concludes the proof.
\end{proof}

The immediate corollary of the above lemma is as follows.
\begin{corollary}
If the protocol $\mathcal{P}$ succeeds in augmented CHSH game with probability $\frac{2}{3}\omega+\frac{1}{3}$, where $\omega=cos^2{\pi/8}$, then the algorithm $A$ can not guess the inputs of algorithm $B$ with probability $\frac{1}{2}+\epsilon$.
\end{corollary}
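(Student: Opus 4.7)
The plan is to mirror the argument of Theorem~\ref{theo1} with the roles of algorithms $A$ and $B$ exchanged. The no-signalling principle invoked at the end of that proof is two-sided: it forbids a communication channel from the preparation device to the measurement device just as strongly as in the other direction. Since Lemma~\ref{lemma1} already recasts protocol $\mathcal{P}$ as an augmented CHSH game whose constraints are joint constraints on the correlation $P(a,b\mid x,y)$, those same constraints should obstruct $A$ from reconstructing $y$ just as they obstructed $B$ from reconstructing $x_0\oplus x_1$.

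First, I would recast the guessing game so that $B$ is handed its input $y$ and $A$ must output a bit matching a one-bit projection of $y$ without any classical communication from $B$'s device; the natural projection is the indicator of ``$y=2$'' versus ``$y\in\{0,1\}$'', which is binary and therefore admits the $\tfrac12$ threshold stated in the corollary. Next I would rerun the Hamming-distance bookkeeping of Theorem~\ref{theo1} in swapped form: the identity $c_2=a_0$ on the input $(x',y)=(0,2)$ and the relation $d_H(a_1,c_2)=\tfrac12$ on $(x',y)=(1,2)$, together with the CHSH-round Hamming distances $d_H(a_{x'},b_y)$ for $x',y\in\{0,1\}$, still yield the same pair of incompatible constraints — a lower bound $d_H(a_0,a_1)\gtrsim 2\omega-1$ from the CHSH rounds against $d_H(a_0,a_1)=\tfrac12$ on the $y=2$ round — and now these must be read as constraints on how $B$'s response statistics can covary with $x$ conditioned on $y$. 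Any classical correlation that could satisfy both without some $y$-dependence leaking into $A$'s marginals would, by the same logic as before, violate no-signalling.

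Finally, I would close with the trivial binary guessing game: any strategy in which $A$ outputs a bit $\tilde y$ with $\Pr[\tilde y = \text{(chosen projection of }y\text{)}]=\tfrac12+\epsilon$ in the absence of any communication from $B$'s device exhibits a one-way channel from $B$ to $A$, contradicting the locality-loophole assumption listed in Section~\ref{proto}. Combined with the augmented-CHSH success hypothesis $\tfrac{2}{3}\omega+\tfrac{1}{3}$, this forces $A$'s guessing probability to be at most $\tfrac12$, which is the statement of the corollary.

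The main obstacle I expect is the physical asymmetry between the two devices: $A$ prepares states and has four settings, while $B$ performs measurements and has three. So some care is needed in specifying exactly which one-bit function of $y$ is being guessed, and in checking that the extra measurement setting $y=2$ does not spoil the Hamming-distance arithmetic when the argument is run in the reversed direction. A clean way around this is to argue entirely at the level of no-signalling on the joint distribution $P(a,b\mid x,y)$, which is manifestly symmetric in the two parties and immediately yields the required marginal independence of $a$ from $y$, bypassing any apparent asymmetry in the protocol's implementation.
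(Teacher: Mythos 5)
Your proposal is correct and follows essentially the same route as the paper, which disposes of this corollary with the single line ``The proof is the same,'' i.e., by symmetrizing the no-signalling argument of Theorem~\ref{theo1} with the roles of the two devices exchanged. Your additional care in choosing a binary projection of the ternary input $y$ and in arguing directly at the level of the joint distribution $P(a,b\mid x,y)$ fills in details the paper leaves implicit, but it is the same underlying argument.
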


The proof is the same.

\begin{theorem}
\label{theo2}
With reference to Lemma~\ref{lemma1}, algorithm $B$ can guess the outputs of algorithm $A$, i.e., when $\ket{0}$ or $\ket{1}$ or $\ket{+}$ or $\ket{-}$ is emitted over $y=2$ with the probability at most $\frac{3}{4}$.
\end{theorem}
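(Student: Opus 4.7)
The plan is to reduce B's task of guessing A's emitted state to a two-stage process: first identifying which of two mutually unbiased bases (MUBs) A prepared in, and then pinpointing the specific state within that basis. Observe that A's four outputs split cleanly according to $x' = x_0 \oplus x_1$: the Hadamard basis $\{\ket{+},\ket{-}\}$ is used when $x'=0$, and the computational basis $\{\ket{0},\ket{1}\}$ is used when $x'=1$. Thus identifying the emitted state is equivalent to correctly identifying both $x'$ (the basis) and the state within that basis.

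For the first subtask I would invoke Theorem~\ref{theo1}: under the no-signaling assumption, B's probability of correctly guessing $x'$ cannot exceed $\tfrac{1}{2}$ up to a non-negligible term. For the second subtask I would analyze the two conditional cases separately. Conditioned on B guessing $x'$ correctly, B can measure the received qubit in the identified basis; because the two candidate states are orthogonal, the outcome deterministically reveals the state, giving conditional success probability $1$. Conditioned on B guessing $x'$ incorrectly, any projective measurement in the wrong MUB produces a uniformly random outcome that is statistically independent of the actual emitted state, so B is forced to pick uniformly between the two states of the correct basis, giving conditional success probability $\tfrac{1}{2}$. Combining,
\begin{equation*}
\Pr[\text{B guesses A's state correctly}] \;\leq\; \tfrac{1}{2}\cdot 1 \;+\; \tfrac{1}{2}\cdot\tfrac{1}{2} \;=\; \tfrac{3}{4}.
\end{equation*}

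The main technical obstacle is verifying that this two-stage decomposition captures every adversarial strategy available to B, including POVMs whose elements are not aligned with either of the two MUBs. I would address this by appealing to the symmetry of the BB84 set under a $\pi/2$ rotation about the $y$-axis of the Bloch sphere: any measurement can be symmetrized without loss of average success probability, and the resulting symmetric POVM admits a convex decomposition as a mixture of projective measurements in the computational and Hadamard bases. Since the success probability is linear in the POVM elements, this mixture argument reduces every adversarial strategy to the two-stage analysis above, after which the bound $\tfrac{3}{4}$ follows directly from the $\tfrac{1}{2}$ no-signaling constraint on guessing $x'$ established in Theorem~\ref{theo1}.
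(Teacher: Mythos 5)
Your route diverges from the paper's, and as written it has a genuine gap. The paper's proof does not attempt a strategy-independent bound from first principles: it leans on the hypothesis inherited from Lemma~\ref{lemma1} that the protocol attains the augmented-CHSH value, in particular that $c_2=a_0$ holds \emph{deterministically} on $(x',y)=(0,2)$ (the $False$-bit check). That deterministic condition is what pins $B$'s $y=2$ measurement down to $\sigma_x$; the number $\frac{1}{2}\cdot 1+\frac{1}{2}\cdot\frac{1}{2}=\frac{3}{4}$ then follows by averaging over $x'$, because the $\{\ket{0},\ket{1}\}$ preparations are read out in an incompatible basis. You instead cap the basis-identification probability at $\frac{1}{2}$ via Theorem~\ref{theo1} and then claim that, after symmetrization, every POVM is a convex mixture of projective measurements in the computational and Hadamard bases. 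That claim is false: the intermediate (Breidbart) measurement $\frac{1}{\sqrt{2}}(\sigma_x+\sigma_z)$ is a projective qubit measurement that is not such a mixture, and on the four prepared states it guesses the bit $a_x$ --- which is the quantity the paper's proof actually bounds, since the proof establishes $\Pr(c_2=a_x)\le\frac{3}{4}$ --- with probability $\cos^2(\pi/8)\approx 0.85>\frac{3}{4}$. So without the deterministic constraint from Lemma~\ref{lemma1} the stated bound simply does not hold over all measurements, and no symmetrization argument can recover it; the missing ingredient in your proof is precisely the self-testing condition.

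A secondary issue is that you read the statement as four-way discrimination of $\{\ket{0},\ket{1},\ket{+},\ket{-}\}$, whereas the paper's proof concerns the binary guess $c_2=a_x$. For the four-way task your conclusion would be true but far from tight: the unconstrained optimum is $\frac{1}{2}$ (since $\mathrm{Tr}(E_i\rho_i)\le\mathrm{Tr}(E_i)$ for pure $\rho_i$ and $\sum_i\mathrm{Tr}(E_i)=2$), and your two-stage decomposition is then superfluous because a correct state guess already entails a correct basis guess, so Theorem~\ref{theo1} alone yields $\frac{1}{2}$. The value $\frac{3}{4}$ is the right answer only for the bit-guessing task under the measurement forced by the self-test. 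To repair your argument, replace the symmetrization step by the observation that passing the $False$-bit check with probability $1$ forces $M_2=\sigma_x$, and then compute the two conditional cases over $x'$ as the paper does.
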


\begin{proof}
Start with Lemma~\ref{lemma1}, let algorithm $B$ follows the strategy that whenever $y=2$, assign $c_2=a_x$. As the protocol $\mathcal{P}$ succeeds in augmented CHSH game, when $(x',y)=(0,2)$, $c_2=a_0$ always. However, for $x'=1$, as the states are measured in an incompatible basis, from the measurement results, i.e., seeing the value of $c_2$, it is impossible to guess the qubit state before measurement. In other words, $\Pr(c_2=a_1)=1/2$. Thus, the overall probability to guess the output of algorithm $A$ by algorithm $B$ over the input $y=2$ is $\frac{1}{2}\cdot 1+\frac{1}{2}\cdot \frac{1}{2}=\frac{3}{4}$. 
\end{proof}

Till now, we have considered  the untrusted devices, i.e., even if there are some trapdoors in the devices, the security remains guaranteed by the virtue of no signalling principle. Now, we consider an external adversary, Eve, who will gather all the extra information through public discussions in the testing phase, i.e., when  $A$ exchanges $x_0x_1$ with $B$ over classical channel, or when $B$ exchanges the value of $b$ to $A$, whenever $x_0x_1 \in \{00, 11\}$ and $y=2$. We will now see if with this extra information, Eve can predict the generated random strings fully or even partially.

Now suppose, from this extra information, Eve has somehow managed to know the particular prepared state $\ket{r}$ ($\ket{r} \in \{\ket{0}, \ket{1}\}$) when $x_0x_1\in\{01,10\}$ and $y=2$. That is she guesses the generation round successfully, i.e., the round when $B$ measures $\ket{r}$ in $\{\ket{+}, \ket{-}\}$ basis. Our claim is that in that case also Eve's guess for $b$ cannot be better than random, i.e., $\Pr(b_{guess}=b)=\frac{1}{2}$. 
This is because, after measuring $\ket{0}$ (similar for $\ket{1}$) in $\{\ket{+}, \ket{-}\}$ basis, probability of getting $\ket{+}$ is $\frac{1}{2}$ and probability of getting $\ket{-}$ is $\frac{1}{2}$. This comes from the fundamental law of Quantum Measurement that is no one can predict the outcome of a measurement a priori with the probability better than $\frac{1}{2}$ (we are focusing on qubit only) even if he/she knows the particular state and the particular basis of the measurement. Hence,$\Pr(b_{guess}=b|x_0,x_1 \in\{01,10\},y=2)=\frac{1}{2}$ conditioning that there is no leakage from Bob's laboratory.

\subsection{Proof of  Randomness of our protocol $\mathcal{P}$}

%\textcolor{green}{state dependent or state independent?}\\
%\textcolor{red}{Our presented QRNG protocol is a state dependent protocol.}
In our protocol $\mathcal{P}$, presented in section \ref{proto}, the random sequences are extracted from the instances where $x_0x_1\in\{01,10\}$, $y=2$. Precisely, when the states are prepared in $\{\ket{0},\ket{1}\}$ basis and measured in $\{\ket{+},\ket{-}\}$. To show that this extracted bit-sequence is truly random, we use the concept of entropy.

%Entropy is basically a measure of uncertainty and it ranges in between $0$ and $1$. When entropy is $0$, we can say the outcome is certain i.e., it happens with probability $1$; but when entropy is $1$, we can say the outcome is truly random. Suppose $A$ is a random variable with event space $E$, then entropy of $A$ is defined as follows:
%$$H(A) = \sum_{x \in E} Pr[A=x] \cdot log \left(\frac{1}{Pr[A=x]}\right).$$

\begin{theorem}
\label{theo5}
The bit-string generated from the protocol $\mathcal{P}$ is truly random.
\end{theorem}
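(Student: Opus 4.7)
The plan is to prove Theorem~\ref{theo5} by showing that in the randomness-generating rounds of protocol $\mathcal{P}$ the output bit $b$ is distributed uniformly on $\{0,1\}$, has Shannon entropy equal to one, and remains unpredictable to any adversary respecting the no-signalling assumption. The randomness-generating rounds are precisely those with $x_0x_1\in\{01,10\}$ and $y=2$; in these rounds the preparation device emits $\ket{0}$ (for $x_0x_1=01$) or $\ket{1}$ (for $x_0x_1=10$), and the measurement device projects onto the mutually unbiased basis $\{\ket{+},\ket{-}\}$. So the first step is to isolate this sub-protocol and write down the conditional distribution of $b$.

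Next, I would invoke Born's rule with the mutually unbiased basis identities $|\langle +\,|\,0\rangle|^2=|\langle -\,|\,0\rangle|^2=|\langle +\,|\,1\rangle|^2=|\langle -\,|\,1\rangle|^2=\tfrac{1}{2}$ to conclude that, conditioned on the generating round,
\begin{equation*}
\Pr[b=0 \mid x_0x_1\in\{01,10\},\,y=2]=\Pr[b=1 \mid x_0x_1\in\{01,10\},\,y=2]=\tfrac{1}{2}.
\end{equation*}
The Shannon entropy of the output bit in a single generating round is therefore $H(b)=-\tfrac{1}{2}\log_2\tfrac{1}{2}-\tfrac{1}{2}\log_2\tfrac{1}{2}=1$, which is the maximum possible entropy for a single bit. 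Invoking the i.i.d.\ assumption of the protocol, the $n$-bit string stored in $Rand$ has Shannon entropy $n$, so it is uniformly distributed on $\{0,1\}^n$.

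To complete the notion of \emph{true} randomness, I would then argue that this entropy is preserved conditioned on any side information an adversary can collect. From Theorems~\ref{theo1} and~\ref{theo2} together with the post-processing of $Check$, $Rand$, and $False$ bits, the measurement device (or an external adversary observing the public exchange) cannot guess the input $x$ beyond chance nor the output $b$ beyond $1/2$ without violating no-signalling. Combined with the fundamental impossibility of predicting a measurement outcome in a mutually unbiased basis with probability exceeding $1/2$, the conditional min-entropy of each generating round, given adversarial side information $E$, satisfies $H_\infty(b \mid E)\geq 1$. Hence the concatenated output inherits full min-entropy.

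The main obstacle I anticipate is not the quantum computation itself, which is essentially a one-line application of Born's rule, but rather the careful bookkeeping needed to rule out correlations between the adversary's information and the generating round. Specifically, one must argue that knowledge of the prepared state $\ket{r}\in\{\ket{0},\ket{1}\}$ (leaked perhaps through the public exchange of $x$) does not help the adversary, because the measurement basis $\{\ket{+},\ket{-}\}$ is unbiased with respect to $\{\ket{0},\ket{1}\}$; this is where the earlier device-independence arguments must be carefully re-used to close the loop and justify calling the output \emph{truly} random rather than merely uniform in the idealised setting.
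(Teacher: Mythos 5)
Your proposal follows essentially the same route as the paper's proof: apply Born's rule to the generating rounds ($x_0x_1\in\{01,10\}$, $y=2$) to get $\Pr[b=0]=\Pr[b=1]=\tfrac{1}{2}$ via the mutually unbiased bases, conclude $H(b)=1$, and declare the output truly random. The additional discussion of adversarial side information and min-entropy that you fold into the proof is material the paper treats separately (in its device-independence section, where it argues $\Pr(b_{guess}=b)=\tfrac{1}{2}$ even when Eve knows the prepared state), so your write-up is, if anything, a slightly more self-contained version of the same argument.
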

\begin{proof}
In protocol $\mathcal{P}$, whenever $x_0x_1\in\{01,10\}$ and $y=2$ we have prepared a quantum state in $\{\ket{0},\ket{1}\}$ basis and then measure that state in $\{\ket{+},\ket{-}\}$ basis. After measurement, if we get $\ket{+}$ we have returned $0$ and if we get $\ket{-}$ we have returned $1$. Let us consider a random variable $B$ such that $$B = \left\{ \begin{array}{rcl}
0, & \mbox{if the output bit b is 0;}\\
1, & \mbox{if the output bit b is 1.}
\end{array}\right.$$\\

Now,
\begin{eqnarray*}
&&\Pr[B=0]\\
&=& \Pr[\text{After measurement we get \ket{+}}]\\
&=& \Pr[ \text{Before measurement the state was \ket{0}}]\\
&& \cdot \Pr[\text{After measurement we get \ket{+} } |\text{ Before}\\ 
&& \text{measurement the state was \ket{0}}]\\
&& + \Pr[\text{Before measurement the state was \ket{1}}]\\
&& \cdot \Pr[\text{After measurement we get \ket{+} } |\text{ Before}\\
&& \text{measurement the state was \ket{1}}]\\ 
&=& \frac{1}{2} \cdot \frac{1}{2} + \frac{1}{2} \cdot \frac{1}{2}
= \frac{1}{4} + \frac{1}{4}
= \frac{1}{2}
\end{eqnarray*}

Similarly, it can be shown that $\Pr(B=1)=\frac{1}{2}$.

\begin{comment}
\begin{eqnarray*}
\Pr[B=1]&=& Pr[\text{After measurement we get \ket{-}}]\\
&=& Pr[\text{Before measurement the state was \ket{0}}]\\
& & \cdot Pr[\text{After measurement we get \ket{-} } |\text{ Before}\\
& & \text{measurement the state was \ket{0}}]\\
&+& Pr[\text{Before measurement the state was \ket{1}}]\\
& & \cdot Pr[\text{After measurement we get \ket{-} }|\text{ Before}\\
& & \text{measurement the state was \ket{1}}]\\
&=& \frac{1}{2} \cdot \frac{1}{2} + \frac{1}{2} \cdot \frac{1}{2}
= \frac{1}{4} + \frac{1}{4} 
= \frac{1}{2}.
\end{eqnarray*}
\end{comment}
So, we can write
\begin{eqnarray*}
H(B) = \frac{1}{2} \log 2 + \frac{1}{2} \log 2 = \frac{1}{2} + \frac{1}{2} = 1.
\end{eqnarray*}

So, here the uncertainty is maximum (as the entropy is $1$), and hence we can say that the generated bit string is truly random.
\end{proof}

\section{New Self-Testing Protocol Deduced from Pseudo Telepathy Game}
\label{nstg}

In this section we are going to introduce a fully new self-testing protocol  in prepare and measure scenario. At first, we describe our protocol, then we show that our protocol can be deduced from multi-party pseudo-telepathy game.% and finally we conclude with why this new self-testing protocol is correct. 

\subsection{New Self-Testing Protocol in Prepare and Measure Scenario}

We are now going to present our new self-testing protocol. We address the protocol as a new game $G_2$. The description of the game $G_2$ is as follows:\\
~\\
{\bf Description of $G_2$:}

\begin{enumerate}
\item $D_1$ and $D_2$ are two separated devices, and there is a one-way quantum channel from $D_1$ to $D_2$ to pass one quantum state at a time. 

\item There is no classical communication channel between the devices.

\item The devices are placed sufficiently apart so that the no-signalling principal works well.

\item In this game there are two inputs, one is a two-bit input $x_0 x_1$ and another one is a one-bit input $x_2$, where $x_0, x_1, x_2 \in \{0,1\}$. %where $\sum_{i=0}^{2} x_i = 0 ~\text{(mod 2)}$ (i.e., the weight of the input string $x_0x_1x_2$ is even).

\item $x_0x_1$ is an input to the device $D_1$ and $x_2$ is an input to the device $D_2$.

\item After receiving the bits $x_0, x_1,$ the preparation device $D_1$ would prepare some specific quantum state and send it to $D_2$ via the one-way quantum channel. Depending on $x_2$ the device $D_2$ would do some operations on it, and then would measure the quantum state to return a bit $b$.

\item The winning condition of the game $G_2$ is as follows.
    \begin{itemize}
   \item  $ \frac{1}{2}(x_0 + x_1 + x_2) = b + (x_0 \wedge (x_0 \oplus x_1))$, iff the weight of the input string $x_0x_1x_2$ is even.
    \item $ x_1=b$, iff the weight of the input string $x_0x_1x_2$ is odd.
    \end{itemize}
\end{enumerate}
~\\
Let us assume that the event of winning the game $G_2$ is denoted by $W_2$. Then $\Pr[\text{Winning the game } G_2] =\Pr[W_2]$. Also, we define two other events, `$Even$' and `$Odd$' as follows:\\
\begin{itemize}
\item `$Even$': If the weight of the input bit string $x_0x_1x_2$ is even, i.e., $(x_0 + x_1 + x_2) = 0 \text{ mod 2}$.
\item `$Odd$': If the weight of the input bit string $x_0x_1x_2$ is odd, i.e., $(x_0 + x_1 + x_2) = 1 \text{ mod 2}$.\\
\end{itemize}
Then the success probability, $\Pr[W_2 |~ Even]$ of the protocol for the cases where the weight of the input string is even ($\sum_{i=0}^{2} x_i = 0 ~\text{(mod 2)}$) is
\begin{eqnarray*}
\frac{1}{4} \sum_{x_0,x_1,x_2} \Pr\left[\frac{(x_0 + x_1 + x_2)}{2}
= 
b + (x_0 \wedge (x_0 \oplus x_1))\right].\\
\end{eqnarray*}
To achieve the maximum value of $\Pr[W_2 | ~Even]$ we use the following quantum strategy.\\
~\\
{\bf Quantum Strategy for $G_2$:}

\begin{enumerate}
\item After receiving the bits $x_0, x_1$, $D_1$ prepares a qubit as follows:
    \begin{itemize}
        \item If $x_0x_1$ is $00$ then the prepared state is $\ket{+}=\frac{1}{\sqrt{2}}(\ket{0}+\ket{1})$.
        \item If $x_0x_1$ is $01$ then the prepared state is $\frac{1}{\sqrt{2}}(\ket{0} + i \ket{1})$.
        \item If $x_0x_1$ is $10$ then the prepared state is $\frac{1}{\sqrt{2}}(\ket{0} - i \ket{1})$.
        \item If $x_0x_1$ is $11$ then the prepared state is $\ket{-}=\frac{1}{\sqrt{2}}(\ket{0}-\ket{1})$.
    \end{itemize}
    
\item Then $D_1$ sends the prepared state to $D_2$ via the quantum channel.

\item After receiving the quantum states, $D_2$ does the following:
    \begin{itemize}
        \item If $x_2$ is $0$, then it applies H-gate (Hadamard gate) on the prepared state and then measures it in $\{\ket{0},\ket{1}\}$ basis.
        \item If $x_2$ is $1$, then it first applies S-gate followed by a H-gate (Hadamard gate) on the prepared state, and then measures it in $\{\ket{0},\ket{1}\}$ basis. Here, the S-gate is basically as follows: $S(\ket{0})=\ket{0}$ and $S(\ket{1})=i\ket{1}$.
    \end{itemize}
    
    \item The returned bit $b$ is $0$, if it gets $\ket{0}$ after the measurement, and the bit $b$ is $1$, if after the measurement it gets $\ket{1}$.
\end{enumerate}
~\\
We store all the output bits $b$, along with their corresponding input bits $x_0,x_1,x_2$. We run the above procedure for $m$-many times (where $m$ is a sufficiently large number), and then we consider the output bits $b$ for which the input bits satisfy the condition $\sum_{i=0}^{2} x_i = 0 ~\text{(mod 2)}$ and calculate the value of $\Pr[W_2 | ~Even]$. We consider the remaining output bits $b$ corresponding to the odd-weight input strings to calculate $\Pr[b=x_1]$. The explanation behind this probability calculation has been discussed elaborately in the later phase of the paper.\\

In Table~\ref{table:2} , we have shown the probability of getting $\frac{1}{2}(x_0 + x_1 + x_2) = b + (x_0 \wedge (x_0 \oplus x_1))$ for even-weight inputs, and then calculated the value of $\Pr[W_2 |~ Even]$. Here, we have considered  the event $E$ where the condition $\frac{(x_0 + x_1 + x_2)}{2} = b + (x_0 \wedge (x_0 \oplus x_1))$ has been satisfied. Again, as the input bits follow the condition $x_0 + x_1 + x_2 = 0 ~\text{(mod 2)}$, we have only $4$ possible values corresponding to the input string $x_0x_1x_2$ which are $000,011,101$, and $110$. Thus, we have that $\Pr[W_2 | ~Even] = \frac{1}{4} (1+1+1+1) = 1$ for the cases where the weight of the input string is even.\\

\begin{center}
    \begin{tabular}{||c|c|c|c|c||}
        \hline
        $~x_0x_1~$  & $~~\rho~~$ & $~x_2~$ & $~b~$ & $~~~~Pr[E]~~~~$\\
        \hline \hline
        $00$ & $\ket{+}$ & $0$ & $0$ & $1$\\
        \hline
        $01$ & $\frac{1}{\sqrt{2}}(\ket{0} + i \ket{1})$ & $1$ & $1$ & $1$\\
        \hline
        $10$ & $\frac{1}{\sqrt{2}}(\ket{0} - i \ket{1})$ & $1$ & $0$ & $1$\\
        \hline
        $11$ & $\ket{-}$ & $0$ & $1$ & $1$\\
        \hline
    \end{tabular}
    
    \captionof{table}{Calculating the first winning condition for our game $G_2$}
    \label{table:2}
\end{center}

For the inputs satisfying the condition $x_0 + x_1 + x_2 = 1 ~\text{(mod 2)}$ (i.e., the odd weight inputs), we have to calculate the probability $\Pr[b=x_1]$. From Table~\ref{table:2new}, we can see that $\Pr[b=x_1] = \frac{1}{4}(0.5+0.5+0.5+0.5) = 0.5$ for the cases where the input weight is odd.\\

\begin{center}
    \begin{tabular}{||c|c|c|c|c||}
        \hline
        $~x_0x_1~$  & $~~\rho~~$ & $~x_2~$ & $~~Pr[b=x_1]~~$\\
        \hline \hline
        $00$ & $\ket{+}$ & $1$ & $0.5$\\
        \hline
        $01$ & $\frac{1}{\sqrt{2}}(\ket{0} + i \ket{1})$ & $0$ & $0.5$\\
        \hline
        $10$ & $\frac{1}{\sqrt{2}}(\ket{0} - i \ket{1})$ & $0$ & $0.5$\\
        \hline
        $11$ & $\ket{-}$ & $1$ & $0.5$\\
        \hline
    \end{tabular}
    
    \captionof{table}{Calculating second winning condition  for our game $G_2$}
    \label{table:2new}
\end{center}

Hence, we can conclude that using the above quantum strategy, we achieve the success probability $\Pr[W_2 | ~Even]=1$ and $\Pr[b=x_1 | ~Odd]=0.5$. In the next portion of this section we show that our new self-testing protocol can be deduced from the multi-party pseudo-telepathy game. %to prove the device independence property.

\subsection{Deduction From the Multi-Party Pseudo-Telepathy Game}
\label{rmpt}

To show the deduction of the proposed game, $G_2$, we consider the special case of the multi-party pseudo-telepathy game with $n=3$ ($G_1$). The second condition of our self-testing protocol can also been deduced from the multi-party pseudo-telepathy game. In this section, we explain why we add this extra checking condition in our new self-testing protocol.

\subsubsection*{Reduction of the Game $G_2$ (with even weight inputs) from the Multi-Party Pseudo-Telepathy Game $G_1$}

For the reduction proof we first re-visit the quantum strategy of winning the multi-party pseudo-telepathy game $G_1$. In the following discussion we have considered $n=3$, as to show the reduction proof we need $n=3$ only. So we have total 3 players $A_0, A_1,$ and $A_2$. The quantum winning strategy is as follows:\\
~\\
{\bf Quantum Strategy for $G_1$}

\begin{enumerate}
\item Before starting the game the players share an entangled state $\frac{\ket{000} + \ket{111}}{\sqrt{2}}$ among them and $i$-th bit of the entanglement is basically the share of the player $A_i$.

\item After getting the input $x_i$ from the dealer, the player $A_i$ applies the gate S on its share bit iff $x_i = 1$.

\item  Then $A_i$ applies H gate on its share bit and then measure its share bit in $\{\ket{0},\ket{1}\}$ basis.

\item The player $A_i$ outputs $y_i=0$ if it gets $\ket{0}$ after the measurement, and it outputs $y_i=1$ if gets $\ket{1}$ after the measurement.

\item The players win the game if and only if $\sum_{i=0}^{2}y_i = \frac{1}{2}\sum_{i=0}^{2} x_i ~\text{(mod 2)}$.
\end{enumerate} 
~\\
Now, let us look into Table~\ref{table:3} of the multi-party pseudo-telepathy game $G_1$ to get a deeper overview. In the table, the input values satisfy the condition \[\sum_{i=0}^{2} x_i = 0 \text{ (mod 2).}\] 

So the all possible input values come from the set $\{000,011,101,110\}$. Here, in the table, we are basically observing which state is generating at the end of $A_2$ because of the operations and measurements done by both $A_0$ and $A_1$. $A_2$ applies the prescribed operations based on the value of $x_2$ on these states and measure to get the value of $y_2$. From Table~\ref{table:3}, it is clear that because of the states generated at $A_2$'s end, the players always win the game. 
That is, in each case (for any input $(x_0,x_1,x_2)$ with $(x_0+x_1+x_2) = 0 \text{ (mod 2)}$), the winning probability is $4 \times (\frac{1}{4} \times 1) = 1$\\
%\newpage
\begin{widetext}
\begin{minipage}{\linewidth}
\begin{small}
\begin{center}
    \begin{tabular}{||c|c|c|c|c|c|c||}
        \hline
        $~x_0x_1~$ & $~y_0y_1~$ & State at $A_2$'s end & $~x_2~$ & $A_2$'s state after measurement & $y_2$ & $\Pr[\sum_{i=0}^{2}y_i = \frac{1}{2}\sum_{i=0}^{2} x_i ~\text{(mod 2)}]$\\
        \hline\hline
        \multirow{4}{*}{00} & 00 & $\ket{+}$ & \multirow{4}{*}{0} & $\ket{0}$ & 0 & 1\\
        \cline{2-3}\cline{5-6}
        & 01 & $\ket{-}$ & & $\ket{1}$ & 1 & 1\\
        \cline{2-3}\cline{5-6}
        & 10 & $\ket{-}$ & & $\ket{1}$ & 1 & 1\\
        \cline{2-3}\cline{5-6}
        & 11 & $\ket{+}$ & & $\ket{0}$ & 0 & 1\\
         \hline
         \multirow{4}{*}{01} & 00 & $\frac{1}{\sqrt{2}}(\ket{0} + i \ket{1})$ & \multirow{4}{*}{1} & $\ket{1}$ & 1 & 1\\
        \cline{2-3}\cline{5-6}
        & 01 & $\frac{1}{\sqrt{2}}(\ket{0} - i \ket{1})$ & & $\ket{0}$ & 0 & 1\\
        \cline{2-3}\cline{5-6}
        & 10 & $\frac{1}{\sqrt{2}}(\ket{0} - i \ket{1})$ & & $\ket{0}$ & 0 & 1\\
        \cline{2-3}\cline{5-6}
        & 11 & $\frac{1}{\sqrt{2}}(\ket{0} + i \ket{1})$ & & $\ket{1}$ & 1 & 1\\
         \hline
         \multirow{4}{*}{10} & 00 & $\frac{1}{\sqrt{2}}(\ket{0} + i \ket{1})$ & \multirow{4}{*}{1} & $\ket{1}$ & 1 & 1\\
        \cline{2-3}\cline{5-6}
        & 01 & $\frac{1}{\sqrt{2}}(\ket{0} - i \ket{1})$ & & $\ket{0}$ & 0 & 1\\
        \cline{2-3}\cline{5-6}
        & 10 & $\frac{1}{\sqrt{2}}(\ket{0} - i \ket{1})$ & & $\ket{0}$ & 0 & 1\\
        \cline{2-3}\cline{5-6}
        & 11 & $\frac{1}{\sqrt{2}}(\ket{0} + i \ket{1})$ & & $\ket{1}$ & 1 & 1\\
         \hline
         \multirow{4}{*}{11} & 00 & $\ket{-}$ & \multirow{4}{*}{0} & $\ket{1}$ & 1 & 1\\
        \cline{2-3}\cline{5-6}
        & 01 & $\ket{+}$ & & $\ket{0}$ & 0 & 1\\
        \cline{2-3}\cline{5-6}
        & 10 & $\ket{+}$ & & $\ket{0}$ & 0 & 1\\
        \cline{2-3}\cline{5-6}
        & 11 & $\ket{-}$ & & $\ket{1}$ & 1 & 1\\
         \hline
    \end{tabular}
    \captionof{table}{Winning probability of the multi-party pseudo-telepathy game $G_1$}
    \label{table:3}
\end{center}
\end{small}
\end{minipage}
\end{widetext}

From Table~\ref{table:3}, we can observe that the winning cases can be divided into four categories. \\
\begin{itemize}
    \item $y_0=0$ with $y_1=x_0 \wedge (x_0 \oplus x_1)$.
    \item $y_0=0$ with $y_1=1 \oplus (x_0 \wedge (x_0 \oplus x_1))$.
    \item $y_0=1$ with $y_1=x_0 \wedge (x_0 \oplus x_1)$.
    \item $y_0=1$ with $y_1=1 \oplus (x_0 \wedge (x_0 \oplus x_1))$.\\
\end{itemize}

As the winning probability of all the cases here are same, so we can consider any one of those conditions. Here, we consider the first case only, i.e., $y_0=0$ with $y_1=x_0 \wedge (x_0 \oplus x_1)$.

 In our new self-testing protocol, we have prepared the quantum states as follows:\\
\begin{itemize}
\item Quantum state $\ket{+}$ for $x_0x_1=00$.
\item Quantum state $\frac{1}{\sqrt{2}}(\ket{0} + i \ket{1})$ for $x_0x_1=01$.
\item Quantum state $\frac{1}{\sqrt{2}}(\ket{0} - i \ket{1})$ for $x_0x_1=10$.
\item Quantum state $\ket{-}$ for $x_0x_1=11$.\\
\end{itemize}

All the quantum states prepared in our new self-testing protocol $G_2$ is basically the states at $A_2$'s end whenever the conditions $y_0=0$ and $y_1=x_0 \wedge (x_0 \oplus x_1)$ are being satisfied in the multi-party pseudo-telepathy game $G_1$. %That is the reason, we are considering the first case specifically.\\

Let us consider that the event of winning the multi-party pseudo-telepathy game $G_1$ is $W_1$. Also we know that the weight of the input string in $G_1$ is always even. So, We can say that $\Pr[\text{Winning the game } G_1] = Pr[W_1 | ~Even]$.

\begin{theorem}
\label{theo6}
The multi-party pseudo-telepathy game $G_1$ is the entangled version of our new game $G_2$ for the even weight inputs.
\end{theorem}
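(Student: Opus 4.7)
The plan is to exhibit an explicit reduction: starting from the three-party quantum strategy for $G_1$ and conditioning on a particular pair of measurement outcomes at the first two parties, I will show that (i) the post-measurement state held by $A_2$ coincides with the state $D_1$ prepares in $G_2$, (ii) the action of $A_2$ matches the action of $D_2$, and (iii) the $G_1$ winning equation collapses to the first winning condition of $G_2$. The four cases in Table~\ref{table:3} (each combination of $y_0$ and $y_1$) give isomorphic reductions with identical success probabilities, so it suffices to fix one representative case; following the paper I will fix $y_0 = 0$ and $y_1 = x_0 \wedge (x_0 \oplus x_1)$, which is precisely the branch whose reduced states at $A_2$'s end match the preparations in $G_2$.

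First I would compute, for each input pair $(x_0,x_1)$, the reduced state at $A_2$'s qubit after $A_0$ applies $S^{x_0} H$, $A_1$ applies $S^{x_1}H$, and both measure in the computational basis, conditioned on obtaining the chosen outcomes $y_0 = 0$ and $y_1 = x_0 \wedge (x_0 \oplus x_1)$. A direct calculation starting from $\tfrac{1}{\sqrt{2}}(\ket{000}+\ket{111})$, together with the observation that applying $S$ to any two qubits of $|\Phi_n^+\rangle$ toggles it between $|\Phi_n^+\rangle$ and $|\Phi_n^-\rangle$ (noted in Section~\ref{mpstg}), reproduces the third column of Table~\ref{table:3}. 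Inspecting only the rows with the fixed $(y_0,y_1)$ yields exactly $\ket{+}, \tfrac{1}{\sqrt{2}}(\ket{0}+i\ket{1}), \tfrac{1}{\sqrt{2}}(\ket{0}-i\ket{1}), \ket{-}$ for $x_0x_1 = 00,01,10,11$, which matches the preparation rule of $G_2$.

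Next I would align the measurement stages. In $G_1$, upon receiving $x_2$, $A_2$ applies $S^{x_2}$ then $H$ then measures in $\{\ket{0},\ket{1}\}$, and outputs $y_2 \in \{0,1\}$ from the outcome; this is, modulo naming the output bit $b$ instead of $y_2$, verbatim the action of $D_2$ in $G_2$. Therefore, conditioned on the chosen branch, the joint distribution of $(x_0,x_1,x_2,y_2)$ generated by $G_1$ coincides with that of $(x_0,x_1,x_2,b)$ in $G_2$ on all even-weight inputs.

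Finally I would verify the winning conditions match. In $G_1$ the players win iff $y_0 + y_1 + y_2 \equiv \tfrac{1}{2}(x_0+x_1+x_2) \pmod{2}$. Substituting $y_0 = 0$, $y_1 = x_0 \wedge (x_0 \oplus x_1)$ and $y_2 = b$ gives
\begin{equation*}
b + \bigl(x_0 \wedge (x_0 \oplus x_1)\bigr) \equiv \tfrac{1}{2}(x_0+x_1+x_2) \pmod{2},
\end{equation*}
which is precisely the first clause of the $G_2$ winning rule. Since $\Pr[W_1 \mid \text{Even}] = 1$ in the quantum strategy and the chosen conditioning branch occurs with the same conditional statistics as in Table~\ref{table:2}, the induced success probability is $\Pr[W_2 \mid \text{Even}] = 1$. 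The main obstacle is the bookkeeping in the first step: one must carefully track how the two successive $S^{x_i}H$ operations together with the projective measurements on $A_0$ and $A_1$ transform $\tfrac{1}{\sqrt{2}}(\ket{000}+\ket{111})$ into the correct $A_2$-side state for every $(x_0,x_1)$, and confirm that choosing the branch $y_0 = 0$, $y_1 = x_0\wedge(x_0\oplus x_1)$ consistently reproduces the $G_2$ preparations rather than a relative-phase variant.
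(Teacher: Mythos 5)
Your proposal is correct and follows essentially the same route as the paper: conditioning on the branch $y_0=0$, $y_1=x_0\wedge(x_0\oplus x_1)$, matching the collapsed states at $A_2$'s end to the $G_2$ preparations, identifying $A_2$'s operations with $D_2$'s, and substituting into the $G_1$ winning equation to recover the first clause of $G_2$'s winning rule with equal success probabilities. No gaps.
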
\label{thq1}
\begin{proof}
We first show that the state preparation part of our game $G_2$ is basically the states generated at $A_2$'s end in $G_1$.
To show that, we first look into the following table for $G_1$. As we mentioned earlier, that in this case, we are considering the condition where $y_0=0$ and $y_1=x_0 \wedge (x_0 \oplus x_1)$.\\

\begin{center}
%\begin{table}
    \begin{tabular}{||c||c|c||c||}
    \hline
        $~x_0 x_1~$ & $~x_0 \wedge (x_0 \oplus x_1)~$ & $~y_0y_1~$ & ~Prepared State\\
        &&&~at $A_2$'s end~ \\
        \hline\hline
        $00$ & $0$ & $00$ & $\ket{+}$\\
        \hline
        $01$ & $0$ & $00$ & $\frac{1}{\sqrt{2}}(\ket{0} + i \ket{1})$\\
        \hline
        $10$ & $1$ & $01$ & $\frac{1}{\sqrt{2}}(\ket{0} - i \ket{1})$\\
        \hline
        $11$ & $0$ & $00$ & $\ket{-}$\\
        \hline
    \end{tabular}
    
    \captionof{table}{State at $A_2$'s end before any operation in $G_1$}\label{imdt}
%\end{table}    
\end{center}

From the above table (Table~\ref{imdt}) we can see that the states prepared at $A_2$'s end before any operation done by $A_2$ is basically same as the prepared states in our new game $G_2$.\\

Now for the measurement part, one can notice that the operations done by $A_2$ is exactly the same as the operations done on the prepared states in the game $G_2$, depending on the value of $x_2$. And also for the measurement and generating the output bit ($y_2$ in game $G_1$ and $b$ in game $G_2$), the same procedure has been followed in the game $G_1$ and in the game $G_2$.\\

Now to conclude the proof, we have to show that $\Pr[winning ~ G_1] = Pr[winning ~ G_2 | (x_0 + x_1 + x_2) = 0 \text{ mod 2}]$, i.e., $\Pr[W_1 | ~Even] = Pr[W_2 | ~Even]$. Actually, from Table~\ref{table:3}, one can easily observe that winning probability in each case is always $1$. So, one can easily calculate that $\Pr[W_1 | ~Even] =1$. Also we have shown that the winning probability in the game $G_2$ with the given quantum strategy is $1$ for even weight inputs (i.e., $\Pr[W_2 | ~Even] = 1$). So, $\Pr[W_1 | ~Even] = \Pr[W_2 | ~Even]$. Numerically, we can write

\begin{eqnarray*}
&& \Pr[W_1 | ~Even]\\
&=& \frac{1}{4} Pr[winning ~ G_1| x_0x_1x_2=000]\\
&& + \frac{1}{4} Pr[winning ~ G_1| x_0x_1x_2=011]\\
&& + \frac{1}{4} Pr[winning ~ G_1| x_0x_1x_2=101]\\
&& + \frac{1}{4} Pr[winning ~ G_1| x_0x_1x_2=110]\\
&=& \frac{1}{4} Pr[\sum_{i=1}^{2}y_i = \frac{1}{2}\sum_{i=1}^{2} x_i ~\text{(mod 2)}| x_0x_1x_2=000]\\
&& + \frac{1}{4} Pr[\sum_{i=1}^{2}y_i = \frac{1}{2}\sum_{i=1}^{2} x_i ~\text{(mod 2)}| x_0x_1x_2=011]\\
&& + \frac{1}{4} Pr[\sum_{i=1}^{2}y_i = \frac{1}{2}\sum_{i=1}^{2} x_i ~\text{(mod 2)}| x_0x_1x_2=101]\\
&& + \frac{1}{4} Pr[\sum_{i=1}^{2}y_i = \frac{1}{2}\sum_{i=1}^{2} x_i ~\text{(mod 2)}| x_0x_1x_2=110].
\end{eqnarray*}

Now in $G_1$, we have two additional winning conditions $y_0=0$ and $y_1=x_0 \wedge (x_0 \oplus x_1)$. So, $\Pr[W_1 | ~Even]$ can be written as follows:

\begin{eqnarray*}
&& \Pr[W_1 | ~Even]\\
&=& \frac{1}{4} Pr[(x_0 \wedge (x_0 \oplus x_1)) + y_2 = \frac{(x_0 + x_1 + x_2)}{2}|\\
&& x_0x_1x_2=000] + \frac{1}{4} Pr[(x_0 \wedge (x_0 \oplus x_1)) + y_2 =\\
&& \frac{(x_0 + x_1 + x_2)}{2}| x_0x_1x_2=011] + \frac{1}{4} Pr[(x_0 \wedge (x_0 \oplus\\
&& x_1)) + y_2 = \frac{(x_0 + x_1 + x_2)}{2}| x_0x_1x_2=101] + \frac{1}{4} Pr[\\
&& (x_0 \wedge (x_0 \oplus x_1)) + y_2 = \frac{(x_0 + x_1 + x_2)}{2}|\\
&& x_0x_1x_2=110].
\end{eqnarray*}

As the prepared state at $A_2$'s end in $G_1$ is same as the prepared state in self-testing game $G_2$, and all the operations done by $A_2$ in $G_1$ is same as the operations done on the prepared state in the game $G_2$, then we can say that $A_2$'s output $y_2$ in $G_1$ is same as the output $b$ in the game $G_2$ (We can also observe that the values of $b$ in Table~\ref{table:2} are matching with corresponding values of $y_2$ in Table~\ref{table:3}). So, we can rewrite the $\Pr[W_1 | ~Even]$ as follows:

\begin{eqnarray*}
&& \Pr[W_1 | ~Even]\\
&=& \frac{1}{4} \cdot (Pr[(x_0 \wedge (x_0 \oplus x_1)) + b = \frac{(x_0 + x_1 + x_2)}{2}|\\ 
&&  x_0x_1x_2=000] + Pr[(x_0 \wedge (x_0 \oplus x_1)) + b = \\
&& \frac{(x_0 + x_1 + x_2)}{2}| x_0x_1x_2=011] + Pr[(x_0 \wedge (x_0 \oplus x_1))\\
&&  + b =\frac{(x_0 + x_1 + x_2)}{2}| x_0x_1x_2=101] + Pr[(x_0 \wedge \\
&& (x_0 \oplus x_1))+ b = \frac{(x_0 + x_1 + x_2)}{2}| x_0x_1x_2=110]\\
&=& Pr[W_2 | ~Even].
\end{eqnarray*}

This concludes the proof of the theorem.
\end{proof}

Now we show that the additional checking condition of our new self-testing protocol has also been deduced from the multi-party pseudo-telepathy game with $n=3$.\\

\subsubsection*{Deduction of The Additional Checking Condition (with odd weight inputs) from Multi-Party Pseudo-Telepathy Game}
\label{mptc}
We know that the input to the multi-party pseudo-telepathy game always has to be of even weight. However, if we consider, the odd-weight inputs and calculate the probability of getting $y_2=x_1$ (this $y_2$ in multi-party pseudo-telepathy game $G_1$ is basically same as $b$ in our new self-testing protocol $G_2$), then we get the value as $0.5$. So if we consider odd weight inputs in $G_1$ and assume the above checking condition as the winning condition for all those cases (we are considering the odd weight inputs only for the shake of the additional checking condition in $G_2$, in reality the weight of inputs in $G_1$ is always even), then we can say that $\Pr[W_1 | ~Odd] = \Pr [y_2=x_1]$. Also $\Pr[W_2 | ~Odd] = \Pr[b=x_1]$. 

In Table~\ref{table:3new}, we calculate the winning probability of the game $G_1$ for odd-weight inputs.\\

\begin{widetext}
    \begin{minipage}{\linewidth}
\begin{center}
%\begin{table*}[h]
    \begin{tabular}{||c|c|c|c|c|c||}
        \hline
        $~x_0x_1~$ & $~y_0y_1~$ & State at $A_2$'s end & $~x_2~$ & $A_2$'s state after applying gates & $Pr[y_2 = x_1]$\\
        \hline\hline
        \multirow{4}{*}{00} & 00 & $\ket{+}$ & \multirow{4}{*}{1} & $\frac{1}{2}((1+i)\ket{0}+(1-i)\ket{1})$ & 0.5\\
        \cline{2-3}\cline{5-5}
        & 01 & $\ket{-}$ & & $\frac{1}{2}((1-i)\ket{0}+(1+i)\ket{1})$ & 0.5\\
        \cline{2-3}\cline{5-5}
        & 10 & $\ket{-}$ & & $\frac{1}{2}((1-i)\ket{0}+(1+i)\ket{1})$ & 0.5\\
        \cline{2-3}\cline{5-5}
        & 11 & $\ket{+}$ & & $\frac{1}{2}((1+i)\ket{0}+(1-i)\ket{1})$ & 0.5\\
         \hline
         \multirow{4}{*}{01} & 00 & $\frac{1}{\sqrt{2}}(\ket{0} + i \ket{1})$ & \multirow{4}{*}{0} & $\frac{1}{2}((1+i)\ket{0}+(1-i)\ket{1})$ & 0.5\\
        \cline{2-3}\cline{5-5}
        & 01 & $\frac{1}{\sqrt{2}}(\ket{0} - i \ket{1})$ & & $\frac{1}{2}((1-i)\ket{0}+(1+i)\ket{1})$ & 0.5\\
        \cline{2-3}\cline{5-5}
        & 10 & $\frac{1}{\sqrt{2}}(\ket{0} - i \ket{1})$ & & $\frac{1}{2}((1-i)\ket{0}+(1+i)\ket{1})$ & 0.5\\
        \cline{2-3}\cline{5-5}
        & 11 & $\frac{1}{\sqrt{2}}(\ket{0} + i \ket{1})$ & & $\frac{1}{2}((1+i)\ket{0}+(1-i)\ket{1})$ & 0.5\\
         \hline
         \multirow{4}{*}{10} & 00 & $\frac{1}{\sqrt{2}}(\ket{0} + i \ket{1})$ & \multirow{4}{*}{0} & $\frac{1}{2}((1+i)\ket{0}+(1-i)\ket{1})$ & 0.5\\
        \cline{2-3}\cline{5-5}
        & 01 & $\frac{1}{\sqrt{2}}(\ket{0} - i \ket{1})$ & & $\frac{1}{2}((1-i)\ket{0}+(1+i)\ket{1})$ & 0.5\\
        \cline{2-3}\cline{5-5}
        & 10 & $\frac{1}{\sqrt{2}}(\ket{0} - i \ket{1})$ & & $\frac{1}{2}((1-i)\ket{0}+(1+i)\ket{1})$ & 0.5\\
        \cline{2-3}\cline{5-5}
        & 11 & $\frac{1}{\sqrt{2}}(\ket{0} + i \ket{1})$ & & $\frac{1}{2}((1+i)\ket{0}+(1-i)\ket{1})$ & 0.5\\
         \hline
         \multirow{4}{*}{11} & 00 & $\ket{-}$ & \multirow{4}{*}{1} & $\frac{1}{2}((1-i)\ket{0}+(1+i)\ket{1})$ & 0.5\\
        \cline{2-3}\cline{5-5}
        & 01 & $\ket{+}$ & & $\frac{1}{2}((1+i)\ket{0}+(1-i)\ket{1})$ & 0.5\\
        \cline{2-3}\cline{5-5}
        & 10 & $\ket{+}$ & & $\frac{1}{2}((1+i)\ket{0}+(1-i)\ket{1})$ & 0.5\\
        \cline{2-3}\cline{5-5}
        & 11 & $\ket{-}$ & & $\frac{1}{2}((1-i)\ket{0}+(1+i)\ket{1})$ & 0.5\\
         \hline
    \end{tabular}
    \captionof{table}{Probability calculation for odd weight inputs for $G_1$}
    \label{table:3new}
    %\end{table}
\end{center}
\end{minipage}
\end{widetext}

From Table~\ref{table:3new}, we can see that $\Pr[W_1 | ~Odd] = \Pr [y_2=x_1] = 0.5$. We already get $\Pr[W_2 | ~Odd] = Pr[b=x_1] = 0.5$ (see Table~\ref{table:2new}). That means $\Pr[W_1 | ~Odd] = \Pr[W_2 | ~Odd]$. Hence, the additional checking condition of our new self-testing protocol can be derived from the three-party pseudo-telepathy game too.

\subsubsection*{Importance of additional checking condition of our new self-testing game}

In the multi-party pseudo-telepathy game, the input bits $x_0$ and $x_1$ have been sent to $A_0$ and $A_1$ respectively and $A_0$ and $A_1$ do not communicate among themselves after getting the inputs. But in our game $G_2$, $D_1$ has the access to both the inputs $x_0$ and $x_1$. And hence there exists a strategy which can mimic the success probability of multi-party pseudo-telepathy game $G_1$. The strategy is as follows:\\
\begin{itemize}
\item For any input $x_0x_1$ to $D_1$, $D_1$ prepares the state $\ket{x_1}$ and send this to $D_2$.
\item $D_2$ does not do any operation on the received qubit and just measure it in $\{\ket{0},\ket{1}\}$ basis to output $b$.\\
\end{itemize}
However, in this case, along with the probability $\Pr[W_2 | ~Even]=1$, $\Pr[b=x_1 | ~Odd]=1$ too. So, the condition for the even weight inputs is not sufficient to check the quantumness of the devices like multi-party pseudo-telepathy game. To prevent this type of strategy, we have to impose the extra condition for odd weight inputs. In the next section, we will discuss the additional condition and show how it can be also derived from the multi-party pseudo-telepathy game.

\ \\
Now, we consider the Pseudo-Telepathy game $G_1$ with this extra condition and name it as {\em Augmented Pseudo-Telepathy} game.

\begin{lemma}
The success probability of augmented pseudo-telepathy game is $\frac{3}{4}$.
\end{lemma}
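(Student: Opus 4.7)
The plan is to invoke the law of total probability, conditioning on the parity of the input weight, and plug in the two conditional probabilities already computed in the preceding discussion. Specifically, the input triple $x_0x_1x_2$ takes values uniformly over $\{0,1\}^3$, which partitions into the four even-weight strings $\{000,011,101,110\}$ and the four odd-weight strings $\{001,010,100,111\}$. Hence $\Pr[\text{Even}] = \Pr[\text{Odd}] = \tfrac{1}{2}$.

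Next, I would collect the two ingredients already in hand. From the quantum strategy analysis of $G_1$ summarized in Table~\ref{table:3}, the standard pseudo-telepathy winning condition is met with certainty on every even-weight input, so $\Pr[W_1 \mid \text{Even}] = 1$. From the computation displayed in Table~\ref{table:3new}, the additional condition $y_2 = x_1$ that augments the game on odd-weight inputs is satisfied with probability exactly $\tfrac{1}{2}$ in every row, giving $\Pr[W_1 \mid \text{Odd}] = \tfrac{1}{2}$.

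The final step is a one-line combination by total probability:
\begin{align*}
\Pr[W_1] &= \Pr[W_1 \mid \text{Even}]\cdot \Pr[\text{Even}] + \Pr[W_1 \mid \text{Odd}]\cdot \Pr[\text{Odd}] \\
&= 1 \cdot \tfrac{1}{2} + \tfrac{1}{2}\cdot \tfrac{1}{2} = \tfrac{3}{4}.
\end{align*}
There is essentially no technical obstacle here, since both conditional probabilities and the uniform prior over inputs are already established earlier in the section; the only mild care needed is to make explicit that the augmented game treats the two disjoint winning conditions on a common uniform input distribution rather than restricting to even-weight inputs as in the original pseudo-telepathy setting.
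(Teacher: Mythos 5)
Your decomposition and arithmetic coincide exactly with the paper's: both condition on the parity of the input weight, assign $\Pr[\text{Even}]=\Pr[\text{Odd}]=\tfrac{1}{2}$, and combine $1\cdot\tfrac{1}{2}+\tfrac{1}{2}\cdot\tfrac{1}{2}=\tfrac{3}{4}$. The one substantive difference is how the conditional probability $\tfrac{1}{2}$ on odd-weight inputs is justified. You read it off the quantum-strategy computation in Table~\ref{table:3new}, which establishes that the honest strategy \emph{achieves} $\Pr[y_2=x_1\mid Odd]=\tfrac{1}{2}$. The paper instead reduces the condition $y_2=x_1$ to the trivial two-party guessing game and invokes no-signalling to argue that $\tfrac{1}{2}$ is the \emph{maximum} attainable: since $x_1$ is held only by $A_1$ and no communication with $A_2$ is permitted, $A_2$ cannot guess $x_1$ with probability exceeding $\tfrac{1}{2}$. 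This distinction matters for how the lemma is used downstream: the augmented game serves as a self-test, so the security argument needs $\tfrac{3}{4}$ to be an upper bound over all admissible strategies, not merely the value realized by the honest one. Your proof is correct for the achievability reading, but to match the paper's intent you should add the one-line no-signalling observation that no strategy can exceed $\tfrac{1}{2}$ on the odd-weight condition (the even-weight condition being trivially capped at $1$), so that $\tfrac{3}{4}$ is both achieved and optimal.
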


\begin{proof}
The success probability of pseudo-telepathy game for the even weight inputs is $1$, whereas the success probability that $y_2=x_1$ in the pseudo telepathy game for odd weight inputs is $1/2$ from the virtue of no signalling condition. This extra condition can be reduced to the trivial two party guessing game where one party guesses the inputs of the other party.  We already showed that with no signalling assumption, the success probability of such guessing game is $1/2$. Thus, the over all success probability of the augmented pseudo-telepathy game is $\frac{1}{2}\cdot 1+\frac{1}{2}\cdot \frac{1}{2}=\frac{3}{4}$.
\end{proof}

\begin{lemma}
The self-testing protocol $G_2$ is an augmented pseudo-telepathy game.
\end{lemma}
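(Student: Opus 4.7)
The plan is to decompose the winning probability of $G_2$ according to whether the input $(x_0,x_1,x_2)$ has even or odd Hamming weight, handle each regime separately by invoking (and extending) the correspondence with $G_1$ already established in Theorem~\ref{theo6}, and then assemble the two contributions into the success probability of the augmented pseudo-telepathy game. Assuming inputs are chosen uniformly from $\{0,1\}^3$, we have $\Pr[\text{Even}]=\Pr[\text{Odd}]=\tfrac12$, so the overall analysis reduces to computing $\Pr[W_2\mid \text{Even}]$ and $\Pr[W_2\mid \text{Odd}]$ and summing with these weights.

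For the even-weight regime I would simply quote Theorem~\ref{theo6}, which already furnishes $\Pr[W_2\mid \text{Even}]=\Pr[W_1\mid \text{Even}]=1$ by matching the state preparation in $G_2$ with the residual state at $A_2$ in $G_1$ on the fixed branch $y_0=0,\ y_1=x_0\wedge(x_0\oplus x_1)$, and by matching $D_2$'s gates and measurement with $A_2$'s gates and measurement. For the odd-weight regime the same state correspondence applies: the four prepared states $\ket{+},\ket{-},\tfrac{1}{\sqrt{2}}(\ket{0}\pm i\ket{1})$ of $G_2$ are exactly the four residual states arising at $A_2$'s side in $G_1$, and $D_2$'s $x_2$-conditioned action mirrors $A_2$'s $x_2$-conditioned action, so the distribution of $b$ in $G_2$ given $(x_0,x_1,x_2)$ coincides with the distribution of $y_2$ in $G_1$ given the same inputs. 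Reading off Table~\ref{table:2new} for $G_2$ and Table~\ref{table:3new} for $G_1$ then yields $\Pr[W_2\mid \text{Odd}]=\Pr[b=x_1\mid \text{Odd}]=\Pr[y_2=x_1\mid \text{Odd}]=\tfrac12$, which is exactly the contribution of the extra checking condition in the augmented pseudo-telepathy game.

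Assembling the two regimes gives $\Pr[W_2]=\tfrac12\cdot 1+\tfrac12\cdot\tfrac12=\tfrac34$, matching the success probability of the augmented pseudo-telepathy game established in the preceding lemma. Because the two winning conditions of $G_2$ are set up precisely to mirror the pseudo-telepathy winning condition for even-weight inputs and the augmenting guessing condition for odd-weight inputs, and because the state and operation correspondences preserve the outcome distributions branch-by-branch, this identification of success probabilities lifts to an identification of the games themselves. The main obstacle I anticipate is articulating in what sense a two-party prepare-and-measure game ``is'' a three-party entanglement-based game: the reduction is not literal but operational, with $D_1$ emulating the joint post-measurement action of $A_0$ and $A_1$ on a fixed branch, and one must check that this emulation respects \emph{both} winning conditions simultaneously rather than only the even-weight one. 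Fortunately the state correspondence of Theorem~\ref{theo6} is independent of $x_2$, so extending it to the odd-weight regime is essentially a matter of re-reading the same tables, and the lemma follows.
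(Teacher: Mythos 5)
Your proposal is correct and follows essentially the same route as the paper's proof: both decompose by the parity of the input weight, invoke Theorem~\ref{theo6} to get $\Pr[W_2\mid Even]=\Pr[W_1\mid Even]=1$, use the odd-weight tables to get $\Pr[W_2\mid Odd]=\Pr[W_1\mid Odd]=\tfrac12$, and combine the two to match the augmented game's success probability $\tfrac34$. Your additional remark about what it means operationally for a two-party prepare-and-measure game to ``be'' a three-party entanglement-based game is a point the paper leaves implicit, but it does not change the substance of the argument.
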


\begin{proof}
We already showed that pseudo-telepathy game $G_1$ can be reduced to $G_2$. We also proved that $\Pr(W_1|Even)=\Pr(W_2|Even)$ and $\Pr(W_1|Odd)=\Pr(W_2|Odd)$. Let the event of winning in augmented $G_1$ be $AW_1$ and the event of winning in $G_2$ be $AW_2$. Then, it is easy to show that $\Pr(AW_1)=\Pr(AW_2)$. This concludes the proof.
\end{proof}
%\begin{theorem}
%Suppose, the device $D_1$ has access to both the bits $x_0$ and $x_1$. There do not exist any strategy still it does not hamper the device-independence property of our self-testing protocol $G_2$.
%\end{theorem}\label{thq2}
%\begin{proof}
%Because of the access of the input bits $x_0$ and $x_1$ by $D_1$, the above strategy provide $Pr[W_2 | ~Even]=1$. In such case $Pr[b=x_1 | ~Odd]=1$ too. However, the prescribed strategy for the game $G_2$ is always giving the following probabilities.
%\begin{itemize}
%\item $Pr[W_2 | ~Even]=1$
%\item $Pr[b=x_1 | ~Odd]=0.5$
%\end{itemize}
%On the other hand, if $b$ is generated randomly to satisfies the condition $Pr[b=x_1 | ~Odd]=0.5$, in that case, $Pr[W_2 | ~Even]\neq 1$ (as only seeing the values of $x_0$ and $x_1$, $D_1$ can not decide the weight of the input string). 
%This proves that the additional checking condition $Pr[b=x_1]= 0.5$ for odd-weight inputs, i.e., $Pr[b=x_1 | ~Odd]= 0.5$, certifies the device-independence property of our self-testing protocol.
%\end{proof}
\section{Our Device Independent QRNG Protocol $\mathcal{Q}$ based on the New Self-Testing Protocol $G_2$}
\label{NQRNG}

In this section, from the new self-testing protocol given in section \ref{nstg}, we have designed a device independent QRNG $\mathcal{Q}$. %Here we have first discussed the intuitive idea of the device independent QRNG protocol and then presented it formally.
 
 \begin{comment}
We can observe that in our new self-testing protocol in the section \ref{nstg}, whenever the input values are in the set $\{001,010,100,111\}$, then we are basically have the qubit $\frac{1}{\sqrt{2}}(\ket{0} + i \ket{1})$ or $\frac{1}{\sqrt{2}}(\ket{0} - i \ket{1})$ just before applying the H gate. That means in those cases we basically have the qubit $\frac{1}{2}((1+i)\ket{0}+(1-i)\ket{1})$ or $\frac{1}{2}((1-i)\ket{0}+(1+i)\ket{1})$ just before the measurement.

So we can generate a random bit whenever we measure $\frac{1}{2}((1+i)\ket{0}+(1-i)\ket{1})$ or $\frac{1}{2}((1-i)\ket{0}+(1+i)\ket{1})$ in this basis $\{\ket{0},\ket{1}\}$ (because after measuring $\frac{1}{2}((1+i)\ket{0}+(1-i)\ket{1})$ or $\frac{1}{2}((1-i)\ket{0}+(1+i)\ket{1})$ in $\{\ket{0},\ket{1}\}$ basis, the probability of getting $\ket{0}$ or $\ket{1}$ is exactly $\frac{1}{2}$). So whenever we get $(x_0+x_1+x_2) = 1 \text{ (mod 2)}$, we can produce a random bit. This is the main motivation of this new $DI-QRNG$ protocol $\mathcal{Q}$.
\end{comment}
In this protocol, we have used two different storage spaces (we can think it as an array) to store the $b$ bits and the corresponding tuple $(x_0,x_1,x_2)$ and proceed  as follows:\\
\begin{itemize}
\item After storing $b$ and its corresponding $x_0,x_1,x_2$, we further create two storages to separate the two different events. 
\item $Check$: Whenever $(x_0+x_1+x_2) = 0 \text{ (mod 2)}$, we store the corresponding $b$ bits along with the tuple $(x_0,x_1,x_2)$ in this storage space.
\item $Rand$: Whenever $(x_0+x_1+x_2) = 1 \text{ (mod 2)}$, we store the the corresponding $b$ along with the tuple $(x_0,x_1,x_2)$ bits in this storage space.
\item This part is considered as classical post processing part.\\
\end{itemize}

Now, we enumerate the protocol $\mathcal{Q}$, in two parts namely quantum part and classical post-processing part, as follows: \\
~\\
{\bf Quantum Part of $\mathcal{Q}$:}

\begin{enumerate}
\item Algorithm $A$ (device $D_1$) accepts the inputs $x_0,x_1$, and algorithm $B$ (device $D_2$) accepts the input  $x_2$, where $x_i\in\{0,1\}$ for $i\in\{0,1,2\}$.

 \item Algorithm $B$ generates the output $b\in\{0,1\}$.
 
\item The algorithm $A$ follows the rule of the state preparation described in our new self-testing game, i.e., 
  \begin{itemize}
        \item If $x_0x_1$ is $00$ then the prepared state is $\ket{+}=\frac{1}{\sqrt{2}}(\ket{0}+\ket{1})$.
        \item If $x_0x_1$ is $01$ then the prepared state is $\frac{1}{\sqrt{2}}(\ket{0} + i \ket{1})$.
        \item If $x_0x_1$ is $10$ then the prepared state is $\frac{1}{\sqrt{2}}(\ket{0} - i \ket{1})$.
        \item If $x_0x_1$ is $11$ then the prepared state is $\ket{-}=\frac{1}{\sqrt{2}}(\ket{0}-\ket{1})$.
    \end{itemize}
    
\item Depending on $x_2$, the algorithm $B$ does the following:
\begin{itemize}
\item If $x_2=0$, the algorithm $B$ applies only H gate on the prepared qubit.
\item If $x_2=0$, the algorithm $B$ applies S gate followed by a H gate on the prepared qubit.
\end{itemize}

\item Then algorithm $B$ measures the qubit in $\{\ket{0},\ket{1}\}$ basis.

\item The output bit $b$ is $0$, if the algorithm measures $\ket{0}$.

\item The output bit $b$ is $1$, if the algorithm measures $\ket{1}$.

\item Algorithm $A$ and algorithm $B$ are repeated $n$ many times, where $n$ is a sufficiently large number (ideally $n \rightarrow \infty$).
\end{enumerate}
\newpage
{\bf Classical Post-Processing part of $\mathcal{Q}$:}

\begin{enumerate}
\item After quantum part ends, $A$ exchanges the inputs $x_0$, $x_1$ with $B$.

\item The produced $b$ bits (along with the corresponding tuple $(x_0,x_1,x_2)$) are stored in two different storage spaces and further processed depending on the even (odd) weight of the inputs. The two new memory locations; $Check$ and $Rand$ is created based on the following specifications.
\begin{itemize}
\item For $Check$ bits, $(x_0+x_1+x_2) = 0 \text{ (mod 2)}$.
\item For $Rand$ bits, $(x_0+x_1+x_2) = 1 \text{ (mod 2)}$.
\end{itemize}

\item $Check$ bits are used to calculate the probability $\Pr[W_2 | ~Even]$. If it is $1$, the protocol $\mathcal{Q}$ continues, aborts otherwise.

\item A fraction $\gamma\geq \frac{1}{2}$ of the $Rand$ bits are used to check whether $\Pr[b=x_1]=0.5$ or not. If the probability is not $0.5$, then we abort the protocol. Otherwise, we use the remaining $1-\gamma$ $Rand$ bits as truly random bit sequences.\\
\end{enumerate}
~\\
Here also for the proper execution of the protocol we need the similar assumptions as before (the assumptions for the new self-testing protocol).\\
\begin{itemize}
\item There are two separate devices $D_1$ and $D_2$.
\item $D_1$ takes the input $x_0x_1$ and prepares a quantum state $\rho$.
\item Then $D_1$ sends that quantum state $\rho$ to $D_2$.
\item $D_2$ takes the input $x_2$ and the quantum state $\rho$, and performs all the operations and measurements. It outputs the bit $b$.
\item There is only one secure one-way quantum channel from $D_1$ to $D_2$ to pass the quantum state $\rho$. 
\item There does not exist any classical channel between $D_1$ and $D_2$. 
\item The devices make no use of any prior information about the choice of settings $x_i$, where $i=0,1,2$.
\item Internal states of the devices are independent and identically distributed (i.i.d).
\item The preparation (device $D_1$) and measurement device (device $D_2$) are independent.\\
\end{itemize}

\begin{remark}
Here as we need both the $Check$ bits and the $Rand$ bits to test the device independence, so it seems that each time whenever we try to generate an $k$-bit random bit string from the protocol we have to repeat the step-7 for $4k+\delta$ many times (where $\delta$ is a non-zero integer). Note that the buyer just needs to test the device independence of the QRNG for the first time only. That is he/she has to check whether the dealer has fooled him/her with a defective or corrupted QRNG. After the testing is done and the buyer is assured that the QRNG is accurate, he/she only repeats the step-7 for $2k+\delta$ many times and use the $Rand$ bits as a random bit string of length $k$. So, here the actual rate of randomness generation is $\frac{1}{2}$.
\end{remark}
\section{Security Proofs of the QRNG Protocol $\mathcal{Q}$}
\label{prf}
In this section, we will prove the device independence  followed by the randomness extraction of the protocol $\mathcal{Q}$.
\subsection{Proof of Device Independence of our protocol $\mathcal{Q}$} We prove the device independence of protocol $\mathcal{Q}$ with the help of Lemma~\ref{lemma} and Theorem~\ref{theo3}.  

\begin{lemma}
\label{lemma}
The protocol $\mathcal{Q}$ for the odd weight input strings can be reduced to the trivial two players' guessing game where upon an input $x\in\{0,1\}$ given to the player $A$, player $B$ will try to guess the bit . If the guessed bit $b=x$, both the players win the game provided there is no communication between the players after the game starts.
\end{lemma}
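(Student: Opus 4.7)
The plan is to construct an explicit reduction showing that any strategy for the odd-weight instances of protocol $\mathcal{Q}$ induces, with matching success probability, a strategy for the trivial two-player guessing game. Once this reduction is in place, the standard no-signalling upper bound of $\frac{1}{2}$ on the guessing game will transfer directly to a bound on $\Pr[b = x_1 \mid \mathrm{Odd}]$ in $\mathcal{Q}$.

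First I would describe the reduction. Given an input bit $x \in \{0,1\}$ to player $A$ in the guessing game, player $A$ sets $x_1 := x$, draws $x_0 \in \{0,1\}$ from a pre-shared random bit $r$ (so $x_0 := r$), and feeds $(x_0, x_1)$ into $D_1$. The same shared randomness $r$ allows player $B$ to choose $x_2$ such that $x_0 + x_1 + x_2 \equiv 1 \pmod{2}$ without learning anything about $x$; concretely, $B$ picks $x_2$ so that $r + x_2 \equiv 1 \pmod{2}$. Player $B$ then invokes $D_2$ on $x_2$ and declares the resulting output $b$ as the guess for $x$. By construction the guessing-game success event $b = x$ coincides with the odd-weight winning event $b = x_1$ of $\mathcal{Q}$, so the two success probabilities are identical.

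The main obstacle is that protocol $\mathcal{Q}$ admits a one-way quantum channel from $D_1$ to $D_2$, whereas the trivial two-player guessing game forbids any communication between the players. I would close this gap by invoking the constraint already enforced by the Check phase: because the protocol continues only when $\Pr[W_2 \mid \mathrm{Even}] = 1$, the state transmitted by $D_1$ is pinned (up to local isometry) to the four-element pseudo-telepathy ensemble identified in Theorem~\ref{theo6}. By the Holevo bound, a single qubit drawn from a four-state ensemble carries at most one classical bit to $D_2$, and this bit can at best encode a symmetric function of $(x_0, x_1)$ such as $x_0 \oplus x_1$; it cannot reveal $x_1$ alone once $x_0$ is hidden from $D_2$ by the shared-randomness construction above. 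Consequently the quantum channel provides no advantage over the non-communicating guessing game, and the reduction is legitimate, completing the proof.
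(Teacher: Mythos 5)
Your explicit reduction (setting $x_1 := x$, drawing $x_0$ from shared randomness, and having $B$ choose $x_2$ to force odd total weight) is a more careful rendering of what the paper does: the paper simply identifies algorithm $A$ and algorithm $B$ with the two players, observes that the odd-weight winning condition $b = x_1$ is literally the guessing game's condition (noting $x_1$ could equally be replaced by $x_0$), and invokes no-signalling for the $1/2$ bound. Up to that point your construction and the paper's agree in substance.

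The genuine gap is in your attempt to neutralize the one-way quantum channel. You claim that because the protocol continues only when $\Pr[W_2 \mid \mathrm{Even}] = 1$, the transmitted states are pinned to the four-state pseudo-telepathy ensemble, and that by the Holevo bound the channel cannot reveal $x_1$ alone. Both halves of this are false, and the paper itself supplies the counterexample in its discussion of why the odd-weight check is needed: the strategy in which $D_1$ sends $\ket{x_1}$ and $D_2$ measures it in the $\{\ket{0},\ket{1}\}$ basis achieves $\Pr[W_2 \mid \mathrm{Even}] = 1$ (so it passes the even-weight check and is \emph{not} excluded by it) while yielding $\Pr[b = x_1 \mid \mathrm{Odd}] = 1$. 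The Holevo bound permits one classical bit per qubit, and one bit is exactly what is needed to transmit $x_1$; nothing forces that bit to be a symmetric function of $(x_0, x_1)$. So the quantum channel \emph{does} confer an advantage over the non-communicating guessing game, and your reduction as argued does not go through. The correct scope of the lemma is narrower than you make it: it only identifies the odd-weight winning condition with the trivial guessing game under the protocol's standing assumption that the devices cannot communicate classically, and the job of ruling out channel-exploiting strategies such as the one above is done by demanding \emph{both} statistical conditions simultaneously in Theorem~\ref{theo3}, not by this lemma on its own.
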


\begin{proof}
Protocol $\mathcal{Q}$ consists of two algorithms; algorithm $A$ and algorithm $B$. These two algorithms can be viewed as two players $A$ and $B$ respectively. The only difference between the protocol $\mathcal{Q}$ for the odd weight input strings and the trivial guessing game is that algorithm $A$ can accept two inputs $x_0,x_1$ and algorithm $B$ needs to guess one input, i.e., here, the winning condition is $x_1=b$, where $b$ is the output bit of algorithm $B$. However, we can replace $x_1$ by $x_0$. It can be shown that assuming no signalling, $\Pr(x_0=b)=1/2$ too. This concludes the proof.
\end{proof}
%We take care about the extra condition in the following lemma.

\begin{theorem}
\label{theo3}
Occurrence of the following two conditions simultaneously certifies the device independence security of the protocol $\mathcal{Q}$. 
\begin{enumerate}
\item $\Pr(\frac{1}{2}(x_0 + x_1 + x_2) = b + (x_0 \wedge (x_0 \oplus x_1)))=1$ for even weight strings. (From Pseudo Telepathy Game)
\item $\Pr(x_1=b)=0.5$ for odd weight strings. (From Lemma~\ref{lemma})
\end{enumerate}
\end{theorem}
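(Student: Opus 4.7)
The strategy is to argue by case exhaustion that, under the no-signalling and i.i.d.\ assumptions stipulated for $\mathcal{Q}$, no classical (local hidden variable) model for the two devices can simultaneously satisfy conditions (1) and (2); hence their joint occurrence forces the devices to implement a genuinely quantum strategy, which is precisely the content of device independence. The plan decomposes into two complementary halves, one per condition, and both halves lean on results already established in the excerpt.

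For the even-weight half, I would invoke Theorem~\ref{theo6}, which shows that $\Pr[W_2 | Even] = 1$ is equivalent to winning the three-party pseudo-telepathy game $G_1$ with probability~$1$. In the standard multi-party (no-signalling) setting, no classical strategy exceeds the bound $\tfrac{1}{2} + 2^{-\lceil 3/2 \rceil} = \tfrac{3}{4}$. In the P\&M setting of $\mathcal{Q}$, however, $D_1$ has access to both $x_0$ and $x_1$ and can, in principle, exploit this by encoding a classical signal into the qubit that it sends to $D_2$. The canonical cheat is for $D_1$ to transmit $\ket{x_1}$ and for $D_2$ to measure in the computational basis and output $b = x_1$; a direct substitution into the even-weight winning condition shows that this strategy already makes condition (1) hold with probability $1$, so condition (1) alone is not enough.

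This is precisely what condition (2) rules out. By Lemma~\ref{lemma}, the odd-weight portion of $\mathcal{Q}$ reduces to the trivial two-party guessing game in which $D_2$ tries to recover $x_1$ without being told it; under no-signalling, the classical optimum is $\Pr(b = x_1) = \tfrac{1}{2}$. The cheat above gives $\Pr(b = x_1 | Odd) = 1$ and hence fails the test. To close the argument, I would show that the same obstruction applies to every classical strategy: any classical encoding by $D_1$ rich enough to make condition (1) deterministic must carry nontrivial information about $x_1$ across the channel, and this in turn biases $\Pr(b = x_1 | Odd)$ away from $\tfrac{1}{2}$ on at least one of the four odd-weight settings. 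Concretely, I would enumerate the classical single-bit channel encodings $m(x_0, x_1) \in \{0, 1\}$ compatible with a single-qubit channel and no pre-shared entanglement, and check that the only encodings surviving condition (1) are (up to complementation) $m = x_0$ and $m = x_1$, each of which forces $\Pr(b = x_1 | Odd) \in \{0, 1\}$, violating condition (2).

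The main obstacle, I expect, is making this last step fully rigorous without tacitly restricting the classical adversary. Concretely, one must either justify the single-bit cap on the classical capacity of the quantum channel (via a Holevo-type bound or an explicit dimension assumption on the prepared state, both consistent with the protocol) or, more abstractly, show that any classical map $(x_0, x_1, x_2) \mapsto b$ that is deterministic on the four even-weight inputs and also satisfies $\Pr(b = x_1 | Odd) = \tfrac{1}{2}$ requires signalling ruled out by the protocol setup. Once this case analysis is nailed down, the two halves combine to eliminate every classical no-signalling model, so the simultaneous occurrence of (1) and (2) forces the devices to be quantum, establishing the device independence of $\mathcal{Q}$.
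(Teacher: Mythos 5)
Your decomposition is the same as the paper's: Lemma~\ref{lemma} handles the odd-weight rounds via the no-signalling bound on the trivial guessing game, and the even-weight rounds are handled by exhibiting the canonical classical cheat ($D_1$ transmits $\ket{x_1}$, $D_2$ outputs $b=x_1$) and observing that it passes condition (1) but is killed by condition (2). The paper essentially stops there: it analyses only that one cheating strategy and then asserts that passing both tests would force the adversary to distinguish check rounds from rand rounds, which violates no-signalling. You go further and propose to enumerate all classical one-bit encodings $m(x_0,x_1)$; this is the more honest route, and the Holevo/dimension caveat you raise about bounding the classical capacity of the channel is a genuine issue the paper never addresses.

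However, the enumeration as you sketch it would not close the argument. You correctly identify that (up to complementation) only $m=x_1$ and $m=x_0$ survive condition (1) --- the latter with the decoding $b=m\oplus x_2$ --- and that each gives $\Pr(b=x_1\mid Odd)\in\{0,1\}$. But a local-hidden-variable adversary may mix these two deterministic strategies using shared randomness between $D_1$ and $D_2$: since each component satisfies condition (1) with probability $1$, so does the mixture, while the balanced mixture yields $\Pr(b=x_1\mid Odd)=\frac{1}{2}\cdot 1+\frac{1}{2}\cdot 0=\frac{1}{2}$, i.e., it passes condition (2) as well. So the two conditions alone do not eliminate every classical no-signalling model; to finish, you must explicitly invoke the protocol's assumption that the preparation and measurement devices are independent (no pre-established correlation) to forbid such mixtures. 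This is worth flagging because the paper elsewhere advertises removing the ``no built-in pre-established correlations'' assumption, and its own proof, which checks only the single strategy $m=x_1$, silently has the same hole.
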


\begin{proof}
The success probability of the players in the trivial guessing game is bounded to $1/2$ by the virtue of no signalling principle. We proved that the protocol $\mathcal{Q}$ for the odd weight strings can be reduced to this trivial guessing game by exploiting the condition $\Pr(x_1=b)=0.5$. 
Now, for $\Pr(\frac{1}{2}(x_0 + x_1 + x_2) = b + (x_0 \wedge (x_0 \oplus x_1)))=1$, 
where, $x_0,x_1$ are the inputs of algorithm $A$ and $x_2$ and $b$ are the input and output of algorithm $B$ respectively, we consider the following strategy (section~\ref{mptc}). 
\begin{itemize}
\item For any inputs $x_0x_1$ to algorithm $A$ (device $D_1$), prepare the state $\ket{x_1}$ and send this to algorithm $B$ (device $D_2$).
\item Algorithm $B$ does not do any operation on the received qubit, but just measures it in $\{\ket{0},\ket{1}\}$ basis to output $b$. 
\end{itemize}
If the algorithms follow this strategy, we will obtain $\Pr(\frac{1}{2}(x_0 + x_1 + x_2) = b + (x_0 \wedge (x_0 \oplus x_1)))=1$ for even weight strings. Now, as the basis for the state preparation and the basis for the measurement are same here, the relative Hamming distance $d_H(x_1,b)=0$. However, to satisfy the second condition we must require $\Pr(x_1=b)=0.5$, i.e., $d_H(x_1,b)=1/2$ for the odd weight input strings. It is not possible unless the adversary successfully guesses the `check' rounds and the `rand' rounds. In this case, she will play the above strategy for the check rounds or for the even weight strings, and for the 'rand' rounds, i.e., for the odd weight strings, she will output a random bit. This clearly violates no signalling assumption.
Hence, the occurrence of both the following conditions simultaneously certifies the device independence security of the protocol $\mathcal{Q}$ under no signalling assumption.
\begin{enumerate}
\item $\Pr(\frac{1}{2}(x_0 + x_1 + x_2) = b + (x_0 \wedge (x_0 \oplus x_1)))=1$. (Pseudo Telepathy Game )
\item $\Pr(x_1=b)=0.5$. (Lemma~\ref{lemma})
\end{enumerate}
 
\end{proof} 
One should note that nowhere in the proof, we make any assumptions on the fabrication of the devices. Neither do we assume that the devices are supplied by a trusted provider. We treat the algorithms ($A$ as well as $B$) as black boxes and check the required conditions based on the input-output statistics of those algorithms.\\

Similar to our protocol $\mathcal{P}$, we now consider the effect of the external adversary, Eve, in the protocol $\mathcal{Q}$. In this direction, our claim is that even if the adversary gathers extra information from the testing phase and manages to know about the particular prepared state in any round of generation phase and the input of $B$ in that round, then also he/she cannot guess the generated bit $b$ in that round with a probability better than $\frac{1}{2}$. In other words, $\Pr(b_{guess}=b|E)=\frac{1}{2}$, where $E$ is the side information leaked in the testing phase to Eve. This is because, in the generation phase, the states $\frac{1}{2}((1\pm i)\ket{0}+(1\pm i)\ket{1})$ are measured in $\{\ket{0},\ket{1}\}$ basis resulting either $\ket{0}$ or $\ket{1}$ with probability $\frac{1}{2}$.\\

Let $B$ be the random variable corresponding to the output $b$ for the inputs $\{001,010,100,111\}$ and $B_E$ be the random variable corresponding to the inputs $\{000,011,101,110\}$. Iff the laboratory of Bob is fully sealed and the raw bits $b$ are not leaked to Eve, then $I(B,B_E)$, i.e., the mutual information of Eve about the raw bits in the generation round given the side information is bounded by the principle of quantum measurement and in ideal case, it will be zero.

\subsection{Proof of  Randomness of our protocol $\mathcal{Q}$}

In our protocol $\mathcal{Q}$,  the random bits are generated from the instances where the weight of the input bit-string $x_0x_1x_2$ is odd. Now, we prove the true randomness of the extracted bit-sequence by exploiting the idea of entropy.

\begin{theorem}
\label{theo8}
The bit-string generated from the protocol $\mathcal{Q}$ is truly random.
\end{theorem}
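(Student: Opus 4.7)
The plan is to mirror the structure of the entropy-based argument used in Theorem~\ref{theo5}, but adapted to the odd-weight branch of protocol $\mathcal{Q}$, which is the source of random bits. First I would define a binary random variable $B$ whose value is the output bit $b$ produced by algorithm $B$ conditioned on the event ``$(x_0+x_1+x_2) = 1~(\text{mod } 2)$''. The goal is to show $\Pr[B=0] = \Pr[B=1] = 1/2$, from which $H(B)=1$ follows immediately and hence maximal uncertainty.

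Next, I would compute $\Pr[B=0]$ by conditioning on the four odd-weight input tuples $(x_0,x_1,x_2)\in\{001,010,100,111\}$ and applying the law of total probability with uniform prior $1/4$ on each tuple. For each such tuple, one traces through the quantum part of $\mathcal{Q}$: the preparation in Step~3 yields one of $\ket{+}$, $\ket{-}$, $\frac{1}{\sqrt{2}}(\ket{0}\pm i\ket{1})$, and the conditional gate application in Step~4 (either $H$ alone or $SH$) transforms the state, prior to the $\{\ket{0},\ket{1}\}$-basis measurement, into $\frac{1}{2}((1\pm i)\ket{0}+(1\mp i)\ket{1})$. These facts are already tabulated in Table~\ref{table:3new}; the key observation is that $|\tfrac{1}{2}(1\pm i)|^2 = 1/2$, so each odd-weight input produces $b=0$ with probability exactly $1/2$, and likewise for $b=1$.

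Combining these, $\Pr[B=0] = \tfrac{1}{4}\cdot 4 \cdot \tfrac{1}{2} = \tfrac{1}{2}$, and symmetrically $\Pr[B=1]=\tfrac{1}{2}$, giving $H(B)=1$. I would then invoke Theorem~\ref{theo3} (device-independence) together with the observation recorded in the discussion preceding this theorem, namely that the mutual information $I(B;B_E)$ between the generation-round bits and any side information available to an external adversary Eve is bounded by the quantum measurement postulate and vanishes when Bob's laboratory is sealed. This ensures that the maximal one-bit entropy is not merely average-case but also holds conditionally on Eve's view, so the extracted string is truly random and not merely uniformly distributed from an outside perspective.

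The main obstacle I anticipate is not the probability calculation itself, which is a routine unfolding of Table~\ref{table:3new}, but rather making the conditional-entropy step fully rigorous: one must argue carefully that the i.i.d.\ assumption on the device's internal state, combined with the no-signalling-certified device independence from Theorem~\ref{theo3}, allows the per-round unconditional entropy $H(B)=1$ to be promoted to $H(B\mid E)=1$ for the relevant side information $E$ (comprising the public exchange of $(x_0,x_1,x_2)$ and the $Check$ bits). Since the theorem as stated only asserts true randomness of the output string, I would phrase the argument primarily at the level of Shannon entropy and note that the stronger conditional statement follows from the device-independence proof already established.
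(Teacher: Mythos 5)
Your proposal is correct and follows essentially the same route as the paper's proof: define $B$ on the odd-weight rounds, use the fact that the pre-measurement states $\tfrac{1}{2}((1\pm i)\ket{0}+(1\mp i)\ket{1})$ give each outcome with probability $\tfrac{1}{2}$, and conclude $H(B)=1$. Your additional remarks on promoting this to conditional entropy given Eve's side information go slightly beyond what the paper's proof formally establishes (the paper relegates that point to the informal discussion of $I(B,B_E)$ preceding the theorem), but they do not change the core argument.
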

\begin{proof}
Let $B$ be the random variable corresponding to the output $b$, when the input comes from the set $\{001,010,100,111\}$. Clearly the values of $b$ can be either $0$ or $1$, so the event space for the random variable $B$ is $E=\{0,1\}$. Then entropy of B is $$H(B) = \sum_{b \in E} Pr[B=b] \cdot log \left(\frac{1}{Pr[B=b]}\right).$$
%So, $$H(B) = Pr[B=0] log \left(\frac{1}{Pr[B=0]}\right) + Pr[B=1] log \left(\frac{1}{Pr[B=1]}\right).$$
Before we compute the values of $\Pr[B=0]$ and $\Pr[B=1]$ to compute $H(B)$, we calculate the probability of getting $\ket{0}$ or $\ket{1}$ after measuring the states $\frac{1}{2}((1+i)\ket{0}+(1-i)\ket{1})$ or $\frac{1}{2}((1-i)\ket{0}+(1+i)\ket{1})$ in $\{\ket{0},\ket{1}\}$ basis, in Table~\ref{rand}.

\begin{center}
\begin{scriptsize}
\begin{tabular}{||c|c|c|c||}
\hline
Input & Before & After & Probability\\
& Measurement & Measurement & \\
\hline\hline
\multirow{2}{*}{001 or 010} & \multirow{2}{*}{$\frac{1}{2}((1+i)\ket{0}+(1-i)\ket{1})$} & $\ket{0}$ & 0.5\\
\cline{3-4}
& & $\ket{1}$ & 0.5\\
\hline
\multirow{2}{*}{100 or 111} & \multirow{2}{*}{$\frac{1}{2}((1-i)\ket{0}+(1+i)\ket{1})$} & $\ket{0}$ & 0.5\\
\cline{3-4}
& & $\ket{1}$ & 0.5\\
\hline
\end{tabular}
\captionof{table}{Probability of Measurement in $\{\ket{0},\ket{1}\}$ basis for our protocol $\mathcal{Q}$}
\label{rand}
\end{scriptsize}
\end{center}

Now let us compute the values of $Pr[B=0]$ and $Pr[B=1]$. We get,\\

\begin{eqnarray*}
&&\Pr[B=0]\\
&=& \Pr[b=0 | (x_0x_1x_2)\in\{001,010,100,111\}]\\
&=&\Pr[\text{Measuring }\ket{0} | (x_0x_1x_2) \in\{001,010,100,111\}]\\
&=&\frac{2}{4} \cdot \Pr[\text{Measuring }\ket{0}|(x_0x_1x_2)
\in\{001,010\}]\\
&& + \frac{2}{4} \cdot \Pr[\text{Measuring }\ket{0} | 
(x_0x_1x_2)\in\{100,111\}]\\
&=&\frac{1}{2} \cdot \Pr[\text{Measuring}\ket{0} |\ket{\psi}= \frac{1}{2}((1+i)\ket{0}+(1-i)\\
&& \ket{1})] + \frac{1}{2} \cdot \Pr[\text{Measuring }\ket{0} | \ket{\psi}= \frac{1}{2}((1-i)\ket{0}+\\
&& (1+i)\ket{1})]\\
&=& \frac{1}{2} \cdot \frac{1}{2} + \frac{1}{2} \cdot \frac{1}{2}=\frac{1}{2}.\\
\end{eqnarray*}
%$$\\
%$= $\\
%$= $
%$$,\\
%~\\and\\~\\
%$Pr[B=1]\\= Pr[b=1 | (x_0x_1x_2)\in\{001,010,100,111\}]$\\
%$= Pr[\text{Measuring }\ket{1} | (x_0x_1x_2)\in\{001,010,100,111\}]$\\
%$= \frac{2}{4} \cdot Pr[\text{Getting }\ket{1}\text{ after measurement} |(x_0x_1x_2)$\\ $\in\{001,010\}] + \frac{2}{4} \cdot Pr[\text{Getting }\ket{1}\text{ after measurement} | $\\ $(x_0x_1x_2)\in\{100,111\}]$\\
%$= \frac{1}{2} \cdot Pr[\text{Getting }\ket{1}\text{ after measuring the qubit }\ket{\psi} | $\\ $\ket{\psi}= \frac{1}{2}((1+i)\ket{0}+(1-i)\ket{1})] + \frac{1}{2} \cdot Pr[\text{Getting }\ket{1}\text{ after}$\\ $\text{ measuring the qubit }\ket{\psi} | \ket{\psi}= \frac{1}{2}((1-i)\ket{0}+(1+i)\ket{1})]$
%$=\frac{1}{2} \cdot \frac{1}{2} + \frac{1}{2} \cdot \frac{1}{2}=\frac{1}{2}$.\\

Similarly, it can be shown that $\Pr(B=1)=\frac{1}{2}$.
Hence, we can write $$H(B) = \frac{1}{2} \cdot log(2) + \frac{1}{2} \cdot log(2) = 1.$$
As the uncertainty is maximum here,  we can say that the generated bit string is truly random.
\end{proof} 

%\textcolor{green}{existence of eve, post processing required or not, key rate}

\section{Comparative Analysis of Our Protocols}
\label{compare}
Now we compare our DI-QRNG protocols in prepare and measure scenario. In our protocol $\mathcal{Q}$, we need more qubits compared to the protocol $\mathcal{P}$. However, the difference between the best possible classical strategy and the best possible quantum strategy is $0.15$ (classical the winning probability is $\frac{1}{2} + 2^{-\lceil \frac{n}{2}\rceil}$) which is more efficient distinguisher than CHSH game or any variants of it.\\

%\begin{comment}
For comparison, we consider the protocols presented in the papers~\cite{rp1,rp2} as those are proposed in P\&M framework and like ours these are state dependent protocols. We name the semi-device independent random number expansion protocol as SQRNG and the experimental measurement-device-independent random number generation protocol as MQRNG. 
%\end{comment}

In this regard, one should note that the comparison is provided for the raw bit stream generated from the protocols. We do not consider any kind of privacy amplifications on the raw randomness.\\

\begin{widetext}
    \begin{minipage}{\linewidth}
\begin{center}
%\begin{table}
%\begin{tiny}
    \begin{tabular}{|c|c|c|c|c|c|}
        \hline
         & \# Qubits & Rate of & Prepare/& Difference&State\\
        Protocol & needed for & Randomness & Measure & of $Q_{val}$&Dependent/\\
        Name & $k$-bit & Generation & device & and $C_{val}$&Independent\\
         & randomness & & trustability & &\\
        \hline \hline
        & & & Needed & &\\
        SQRNG~\cite{rp1} & $k+\delta$ & 1 & for & $0.1$&State \\
         & & & Measure & &Dependent\\
        \hline
        & & & Needed && \\
        MQRNG~\cite{rp2} & $k+\delta$ & 1 & for & NA&State\\
         & & & Prepare &&Dependent \\
        \hline
        Our & & & & &\\
        Protocol $\mathcal{P}$ & $k+\delta$ & 1 & Not & $0.1$&State\\
         & & & Needed &&Dependent \\
        \hline
        Our & & & && \\
        Protocol $\mathcal{Q}$ & $2k+\delta$ & $\frac{1}{2}$ & Not & $0.15$&State\\
         & & & Needed &&Dependent \\
        \hline
    \end{tabular}
    \captionof{table}{Comparison of our protocols with the existing ones}
    \label{table:4}
%\end{tiny}
%\end{table}   
\end{center}
\end{minipage}
\end{widetext}

In Table~\ref{table:4}, $Q_{val}$ and $C_{val}$ denote the winning probability using the quantum strategy and the best possible classical strategy respectively and $\delta$ is a fixed non-zero integer. Also the second column of the table 
denotes the number of qubits needed for generating $k$-bit random number, after 
the initial validation of the device independence property has been completed.

\section{conclusion}
\label{conl}
In the current manuscript, we have shown that the self-testing game proposed by Tavakoli et al. (Phys. Rev. A, 2018) is basically a reduced version of the well-known CHSH game. Though they have not used any entanglement in their protocol, however, it indirectly relies on the violation of Bell's inequality. We have exploited their self-testing protocol to design a DI-QRNG protocol and proved its device-independence and true-randomness of the outputs. Further, we have proposed a new and more efficient self-testing game in prepare and measure scenario which is based on the pseudo-telepathy game. Based on this new game, we have presented a novel DI-QRNG protocol and proved its device-independence as well as true-randomness extraction.

For our new self-testing protocol, we have used the multi-party pseudo-telepathy game for 3 players. One can observe that the winning probability of the multi-party pseudo-telepathy game with $n$ players for the classical strategy is $\frac{1}{2} + 2^{-\lceil \frac{n}{2}\rceil}$. Therefore, as $n$ grows larger, the winning probability difference between the classical and the quantum strategies moves closer to $0.5$. So, it will be easier to distinguish whether the classical or the quantum strategy has been exploited. In this direction, the generalization of our protocol for $n$ beyond 3 could be an interesting future research work.

%\bibliographystyle{apsrev4-1}
%\bibliography{bibliography}

\end{document}